\definecolor{darkblue}{rgb}{0.0,0.0,0.3}
\newtheorem{definitionenv}{Definition}
\newtheorem{remarkenv}[definitionenv]{Remark}
\newenvironment{remark}{\begin{remarkenv}\rm}{\end{remarkenv}}
\newtheorem{exampleenv}{Example}
\newtheorem{mydef}{Definition}
\newtheorem{mytheorem}{Theorem}
\newtheorem{mylemma}{Lemma}
\newcommand{\mcal}{\mathcal}
\newcommand{\bes} {\begin{subequations}}
\newcommand{\ees} {\end{subequations}}
\newcommand{\bea} {\begin{eqnarray}}
\newcommand{\eea} {\end{eqnarray}}
\newcommand{\red}[1]{\textcolor{red}{#1}}
\newcommand{\beq}{\begin{equation}}
\newcommand{\beqs}{\begin{equation*}}
\newcommand{\eeq}{\end{equation}}
\newcommand{\eeqs}{\end{equation*}}
\newcommand{\ignore}[1]{}
\def\>{\rangle}
\def\<{\langle}
\def\Pr{\mathrm{Pr}}
\def\llb{\llbracket}
\def\rrb{\rrbracket}
\newcommand{\ket}[1]{|#1\rangle}
\newcommand{\sfH}{\textsf{H}}
\newcommand{\cC}{\mathcal{C}}
\newcommand{\cG}{\mathcal{G}}
\begin{document}
\title{Constant depth fault-tolerant Clifford circuits for multi-qubit large block codes}
\author{Yi-Cong Zheng}

\email{
yicongzheng@tencent.com}
\affiliation{Tencent Quantum Lab, Tencent, Shenzhen, Guangdong, China, 518057}
\affiliation{Centre for Quantum Technologies, National University of Singapore, Singapore 117543}
\affiliation
{Yale-NUS College, Singapore 138527}

\author{Ching-Yi Lai}
\affiliation{Institute of Communications Engineering, National Chiao Tung University, Hsinchu 30010, Taiwan}
\author{Todd A. Brun}
\affiliation{Ming Hsieh Department of Electrical and Computer Engineering, University of Southern California, Los Angeles, California 90089, USA\\}
\author{Leong-Chuan Kwek}
\affiliation{Centre for Quantum Technologies, National University of Singapore, Singapore 117543}
\affiliation{MajuLab, CNRS-UNS-NUS-NTU International Joint Research Unit, UMI 3654, Singapore}
\affiliation{
Institute of Advanced Studies, Nanyang Technological University, Singapore 639673}
\affiliation{National Institute of Education, Nanyang Technological University, Singapore 637616 }
\date{\today}

\begin{abstract}
Fault-tolerant quantum computation (FTQC) schemes using large block codes that encode $k>1$ qubits in $n$ physical qubits can potentially reduce the resource overhead to a great extent because of their high encoding rate. However, the fault-tolerant (FT) logical operations for the encoded qubits are difficult to find and implement, which usually takes not only a very large resource overhead but also long \emph{in-situ} computation time. In this paper, we focus on Calderbank-Shor-Steane $\llb n,k,d\rrb$ (CSS) codes and their logical FT Clifford circuits. We show that the depth of an arbitrary logical Clifford circuit can be implemented fault-tolerantly in $O(1)$ steps \emph{in-situ} via either Knill or Steane syndrome measurement circuit, with the qualified ancilla states efficiently prepared. Particularly, for those codes satisfying $k/n\sim \Theta(1)$, the resource scaling for Clifford circuits implementation on the logical level can be the same as on the physical level up to a constant, which is independent of code distance $d$. With a suitable pipeline to produce ancilla states, our scheme requires only a modest resource cost in physical qubits, physical gates, and computation time for very large scale FTQC.
\end{abstract}
\maketitle

\section{Introduction}
Quantum error-correcting codes (QECCs)~\cite{Shor:1995:R2493,Steane:1996:793,Calderbank:1996:1098,Gaitan:2008:CRC,QECbook:2013}
and the theory of fault-tolerant quantum computation (FTQC)~\cite{Shor:1996:56,Aharonov:1997:176,
Gottesman:9705052,Kitaev:2003:2,
DivencenzoFTPhysRevLett.77.3260, KnillFTNature,Aharonov:2006:050504,QECbook:2013} have shown that large-scale quantum computation is possible if the noise is not strongly correlated between qubits and its rate is below certain threshold~\cite{Aharonov:1997:176,KnillFTNature,
Terhal:2005:012336,Aharonov:2006:050504,Aliferis:2006:97,cross2007comparative,
Aliferis:2008:181}.

Large QECCs with high encoding rates typically encode many logical qubits with high distance. FTQC architectures based on these codes may potentially outperform smaller codes and topological codes, like surface codes~\cite{Kitaev:2003:2, Folwer2012PhysRevA.86.032324} and color codes ~\cite{Bombin:2006:180501}, in terms of the overall resource required and the error correction ability~\cite{steane1999efficient_Nature, steane2005fault, brun2015teleportation, Steane:2003:042322, gottesman2013overhead}. However, for an $\llb n,k,d\rrb$ code with $k,d\gg 1$, it may be extremely difficult (or even impossible) to find all required fault-tolerant (FT) logical gates. For Calderbank-Shor-Steane (CSS) codes~\cite{Calderbank:1996:1098,Steane:1996:793}, one way to resolve this challenge is to implement logical circuits indirectly through Knill or Steane syndrome extraction circuits~\cite{KnillFTNature, steane1997active} with additional blocks of encoded ancilla qubits prepared in specific states~\cite{steane1997active, Gottesman:1999:390,Zhou:2000:052316, brun2015teleportation}.
Unfortunately, the distillation processes for each encoded ancilla state are complicated, and different ancilla states are usually required for each logical gate. As an example, a Clifford circuit on $k$ qubits requires $O(k^2/\log k)$ Clifford gates~\cite{markov2008optimal, aaronson2004improved} with circuit depth $O(k)$; if an $\llb n,k, d \rrb$ CSS code is used, it requires $O(k^2/\log k)$ logical Clifford gates~\cite{aaronson2004improved}, and in general, $O(k^2/\log k)$ different ancilla states need to be prepared, and the same number of Knill/Steane syndrome extraction steps are required.

A natural question arises: can one implement logical circuits on those multi-qubit large block codes ($k\gg 1$) in a quicker and more efficient way? In this paper, we show that for Clifford circuits, the answer is positive for CSS codes: one can implement an arbitrary logical Clifford circuit fault-tolerantly using $O(1)$ qualified encoded ancilla states and a constant number of Knill/Steane syndrome measurement steps. Thus the depth of a logical Clifford circuit can be reduced to $O(1)$ \emph{in-situ}. Furthermore, we show that with the distillation protocol proposed in ~\cite{Ancilla_distillation_1,zheng2017efficient}, these ancilla states can be distilled \emph{off-line} in ancilla factories with yield rate close to $O(1)$ asymptotically, if the physical error rate is sufficiently low. Especially, for those families of large block codes with $k/n\sim \Theta(1)$, the number of physical qubits and physical gates required for an arbitrary logical Clifford circuit can scale as $O(k)$ and $O(k^2/\log k)$ respectively on average. These results suggest that the resource cost of Clifford circuits on the logical level can scale the same as on the physical level, if the distillation circuits and large block quantum codes are carefully chosen. With a proper pipeline structure of ancilla factories to work in parallel, we are also convinced that the scaling of the required resources including the overall number of qubits, physical gates and the computation time, can be very modest for large scale FTQC.

The structure of the paper is as follows. We review preliminaries and set up notation in Sec.~\ref{sec:prim}.
In Sec.~\ref{sec:constant_depth}, we propose our scheme to implement FT logical Clifford circuits via a constant number of Knill or Steane syndrome measurement. The resource overhead for the scheme is carefully analyzed.
In Sec.~\ref{sec:discussion}, we compare our scheme to some other closely-related FTQC schemes according to the resource overhead and real-time computational circuit depth.


\section{PRELIMINARIES and notation}\label{sec:prim}
\subsection{Stabilizer formalism and CSS codes}
Let $\mathcal{P}_n=\mathcal{P}_1^{\otimes n}$ denote the $n$-fold Pauli group, where
\begin{equation*}
\mathcal{P}_1=\{\pm I, \pm i I, \pm X, \pm i X, \pm Y, \pm i Y, \pm Z, \pm i Z\},
\end{equation*}
and $I={\footnotesize \left(
         \begin{array}{cc}
           1 & 0 \\
           0 & 1 \\
         \end{array}
       \right)}$, $X={\footnotesize \left(
         \begin{array}{cc}
           0 & 1 \\
           1 & 0 \\
         \end{array}
       \right)}
$, $Z={\footnotesize \left(
         \begin{array}{cc}
           1 & 0 \\
           0 & -1 \\
         \end{array}
       \right)}$, and $Y=iXZ$ are the Pauli matrices.

Let $X_j$, $Y_j$, and $Z_j$ act as single-qubit Pauli matrices on the $j$th qubit and trivially elsewhere.
We also introduce the notation $X^{\mathbf a}$, for ${\mathbf a}=a_1\cdots a_n\in \mathbb{Z}_2^n$, to denote the operator $\otimes_{j=1}^n X^{a_j}$ and let $\text{supp}({\mathbf a})=\{j:a_j=1\}$.
For ${\mathbf a}, {\mathbf b}\in \mathbb{Z}_2^n$, define $\mathcal{I}_{\bf ab}=\text{supp}({\bf a})\bigcap\text{supp}({\bf b})$ and let $\tau_{\bf ab}=\left|\mathcal{I}_{\bf ab}\right|$ be the size of $\mathcal{I}_{{\bf a}{\bf b}}$.  An $n$-fold Pauli operator can be expressed as
\begin{equation}\label{eq:general_error}
i^l\cdot \bigotimes_{j=1}^n X^{a_j}Z^{b_j}=i^l X^{\bf a}Z^{\bf b}, \quad {\bf a},{\bf b}\in \mathbb{Z}^n_2, \ l\in\{0,1,2,3\}.
\end{equation}
Then $({\bf a}\,|\,{\bf b})$ is called the
\emph{binary representation} of the Pauli operator $i^lX^{\bf a}Z^{\bf b}$ up to an overall phase $i^l$. In particular, $\pm i^{\tau_{\bf ab}} X^{\bf a}Z^{\bf b}$, which is Hermitian, has eigenvalues $\pm 1$. From now on we use the binary representation and neglect the overall phase for simplicity when there is no ambiguity. We define the weight of $E$, $\text{wt}(E)$, as the number of terms in the tensor product which are not equal to the identity.

Suppose $\mathcal{S}$ is an Abelian subgroup of $\mathcal{P}_n$
with a set of $n-k$ independent and commuting generators $\{S_1=i^{\tau_{{\bf a}_1{\bf b}_1}} X^{{\bf a}_1}Z^{{\bf b}_1},\dots, S_{n-k}=i^{\tau_{{\bf a}_{n-k}{\bf b}_{n-k}}} X^{{\bf a}_{n-k}}Z^{{\bf b}_{n-k}}\}$, and $\mathcal{S}$ does not include $-I^{\otimes n}$. An $\llb n,k\rrb$ quantum stabilizer code $C(\mathcal{S})$
is defined as the $2^{k}$-dimensional subspace of the $n$-qubit state space ($\mathbb{C}^{2^n}$) fixed by  $\mathcal{S}$,
which is the joint $+1$ eigenspace of $S_1, \dots, S_{n-k}$.
Then for a codeword $\ket{\psi}\in C(\mcal{S})$, $$S\ket{\psi}=\ket{\psi}$$ for all $S\in \mathcal{S}$. We also define $N(\mathcal{S})$ to be the normalizer of the stabilizer group. Thus any non-trivial logical Pauli operator on codewords belongs to $N(\mathcal{S})\backslash\mathcal{S}$ and let $X_{j,L}$, $Y_{j,L}$ and $Z_{j,L}$ be logical Pauli operators acting on the $j$th logical qubit. The distance $d$ of the code is defined as
$$d=\min_{L\in N(\mathcal{S})\backslash \mathcal{S}} \text{wt}(L).$$ Suppose $\mathcal{S'}\in \mathcal{P}_n$ is another Abelian subgroup containing $\mathcal{S}$ with $k=0$, then $C(\mathcal{S}')$ has only one state $|\psi\>$ up to a global phase. This state is called a \emph{stabilizer codeword} of $\mathcal{S}$, whose binary representation is
$$
\psi = \left(
         \begin{array}{c|c}
          {\bf a}_1 & {\bf b}_1 \\
          \vdots & \vdots \\
          {\bf a}_{n} & {\bf b}_{n}\\
         \end{array}
       \right).
$$

If a Pauli error $E$ corrupts $\ket{\psi}$, some eigenvalues of $S_1,\dots, S_{n-k}$ may be flipped, if they are measured on $E|\psi\>$.
Consequently, we gain information about the error by measuring the stabilizer generators $S_1,\dots, S_{n-k}$,
and the corresponding measurement outcomes (in bits) are called the \emph{error syndrome} of $E$. A quantum decoder has to choose a good recovery operation based on the measured error syndromes.

CSS codes are an important class of stabilizer codes for FTQC. Their generators are tensor products of the identity and either $X$ or $Z$ operators (but not both)~\cite{Calderbank:1996:1098, Steane:1996:793}. More formally, consider two classical codes, $\mathcal{C}_Z$ and $\mathcal{C}_X$ with parameters $[n, k_Z, d_Z]$ and $[n, k_X, d_X]$, respectively, such that $\mathcal{C}_X^\perp \subset \mathcal{C}_Z$. The corresponding parity-check matrices are $\sfH_Z$ ($(n-k_Z)\times n$) and $\sfH_X$ ($(n-k_X)\times n$) with full rank.
One can form an $\llb n, k=k_X+k_Z-n, d \rrb$ CSS code, where $d\geq\min\{d_Z,d_X\}$. In general, a logical basis state can be represented as:
$$
|u\>_L=\sum_{x\in \mathcal{C}_{X}^\perp}|x+u  D\>,
$$
where $u\in \mathbb{Z}_2^k$ and $D$ is a $k\times n$ binary  matrix, whose rows are the coset leaders of $\mathcal{C}_Z/\mathcal{C}_X^\perp$. The stabilizer generators of a CSS code in binary representation are:
$$
\left(
  \begin{array}{c|c}
    \textsf{H}_Z & {\bf 0} \\
    {\bf 0} & \textsf{H}_X \\
  \end{array}
\right),
$$
where $\textsf{H}_X(\textsf{H}_Z)$ is made of $Z(X)$ type Pauli operators. For the special case that $\cC_{X}=\cC_{Z}$, we call such a code self-dual CSS code.

\subsection{Clifford circuits}\label{sec:stabilizer_circuit}
Clifford circuits are composed solely of Hadamard ({\rm H}), Phase ({\rm P}), and controlled-NOT ({\rm CNOT}) gates, defined as
\begin{equation*}
\text{H} = \frac{1}{\sqrt{2}}\left(
             \begin{array}{cc}
               1 & 1 \\
               1 & -1 \\
             \end{array}
           \right),\
\text{P} = \left(
             \begin{array}{cc}
               1 & 0 \\
               0 & i \\
             \end{array}
           \right),\
{\small \text{CNOT} = \left(
                \begin{array}{cccc}
                  1 & 0 & 0 & 0 \\
                  0 & 1 & 0 & 0 \\
                  0 & 0 & 0 & 1 \\
                  0 & 0 & 1 & 0 \\
                \end{array}
              \right).}
\end{equation*}
The $n$-qubit Clifford circuits form a finite group, which, up to overall phases, is isomorphic to the binary symplectic matrix group defined in~\cite{aaronson2004improved}:
\begin{mydef}[Symplectic group]\label{def:symplectic_group}
The group of $2n\times 2n$ symplectic matrices over $\mathbb{Z}_2$ is defined in:
\beqs
{\rm Sp}(2n, \mathbb{Z}_2)\equiv \{M\in {\rm GL}(2n, \mathbb{Z}_2): MJ_nM^t=J_n\}
\eeqs
under matrix multiplication. Here
$
J_n=\left(
             \begin{array}{c|c}
               {\bf 0} & I_n \\
               I_n & {\bf 0} \\
             \end{array}
           \right).
$
\end{mydef}
In general, $M\in {\rm Sp}(2n, \mathbb{Z}_2)$ has the form
$$
M=\left(
  \begin{array}{c|c}
    Q & R \\
    \hline
    \rule[0.4ex]{0pt}{8pt}
    S & T \\
  \end{array}
\right),  \label{eq:symplectic_matrix}
$$
where $Q$, $R$, $S$ and $T$ are $n\times n$ square matrices satisfying the following conditions:
$$
QR^t=RQ^t, \quad ST^t = TS^t, \quad Q^tT + R^t S = I_n.
$$
In other words, the rows of $(Q \ | \ R)$ are symplectic partners of the rows of $(S \ | \ T)$.
Thus, an  $n$-qubit Clifford circuit can be represented by a $2n\times 2n$ binary matrix with respect to the basis of the binary representation of Pauli operators in (\ref{eq:general_error}). Then $UX^{{\bf a}}Z^{{\bf b}}U^\dag$ is represented by $({\bf a},{\bf b})M_U$, where $M_U$ is the binary symplectic matrix corresponding to $U$.
For example, the idle circuit (no quantum gates) is represented by $I_{2n}$, the $2n\times 2n$  identity matrix.
The representation of consecutive Clifford circuits $M_1,\dots, M_j$  is their binary matrix product
$$
M=M_1\cdots M_j.
$$
We emphasize here that the symplectic matrix $M$ acts on the binary representation of a Pauli operator from the right. The binary representations of Pauli operators and Clifford unitaries omit the overall phases of full operators. If needed, such overall phases can always be compensated by a single layer of gates consisting solely of $Z$ and $X$ gates~\footnote{Such extra layer has depth $O(1)$. Throughout the paper, Pauli gates are assumed to be free and can be directly applied to qubits. This is also true in FTQC using stabilizer codes, where logical Pauli operators are easy to realize. } on some subsets of qubits~\cite{aaronson2004improved,maslov2017Bruhat}.

Let $\text{C}(j,l)$  denote a CNOT gate with control qubit $j$ and target qubit $l$.
The actions of appending a Hadamard, Phase, or CNOT gate to  a Clifford circuit $M$ can be described as follows:
\begin{enumerate}
  \item A Hadamard gate on qubit $j$ exchanges columns $j$ and $n+j$ of $M$.
  \item A Phase gate on qubit $j$ adds column $j$ to column $n+j$ (mod 2) of $M$.
  \item $\text{C}(j,l)$ adds column $j$ to column $l$ (mod 2) of $M$ and adds column $n+l$ to column $n+j$ (mod 2) of $M$.
\end{enumerate}
\section{Constant depth FT Clifford circuit}\label{sec:constant_depth}


\subsection{FT syndrome measurement}\label{sec:ft_syndrome}

The goal of an error correction protocol in FTQC is to find the most likely errors during computation, based on the extracted syndromes. However, the circuits to perform a syndrome measurement may introduce additional errors to the system or get wrong syndromes with high probability. Therefore, the error correction may fail, if not treated properly.

In this section, we briefly review two major protocols used in this paper --- Knill and Steane syndrome measurements~\cite{KnillFTNature, steane1997active}. Each scheme has its own advantages in different computation scenarios~\cite{chamberland2018deep}, such as a better threshold or a better ability to handle particular types of noise, and both can be used to construct arbitrary FT logical Clifford circuits.

\subsubsection{Knill syndrome measurement}

For an arbitrary $\llb n, k,d \rrb$ stabilizer code, one can use the logical teleportation circuit in Fig.~\ref{fig:Knill} to extract the error syndrome~\cite{steane1997active}, as proposed by Knill~\cite{KnillFTNature}. Here, two blocks of ancilla qubits are maximally entangled in a logical Bell state $|\Phi_L^+\>^{\otimes k}=\frac{1}{\sqrt{2}}\left(|0_L\>\otimes |0_L\>+|1_L\>
\otimes |1_L\>\right)^{\otimes k}$. The upper block of ancilla qubits are encoded to the same code protecting the data state, while the lower ones can be protected by an arbitrary stabilizer code encoding $k$ logical qubits. In this paper, we restrict ourselves to the same $\llb n, k,d \rrb$ CSS code for all blocks.

\begin{figure}[!htp]
\centering\includegraphics[width=60mm]{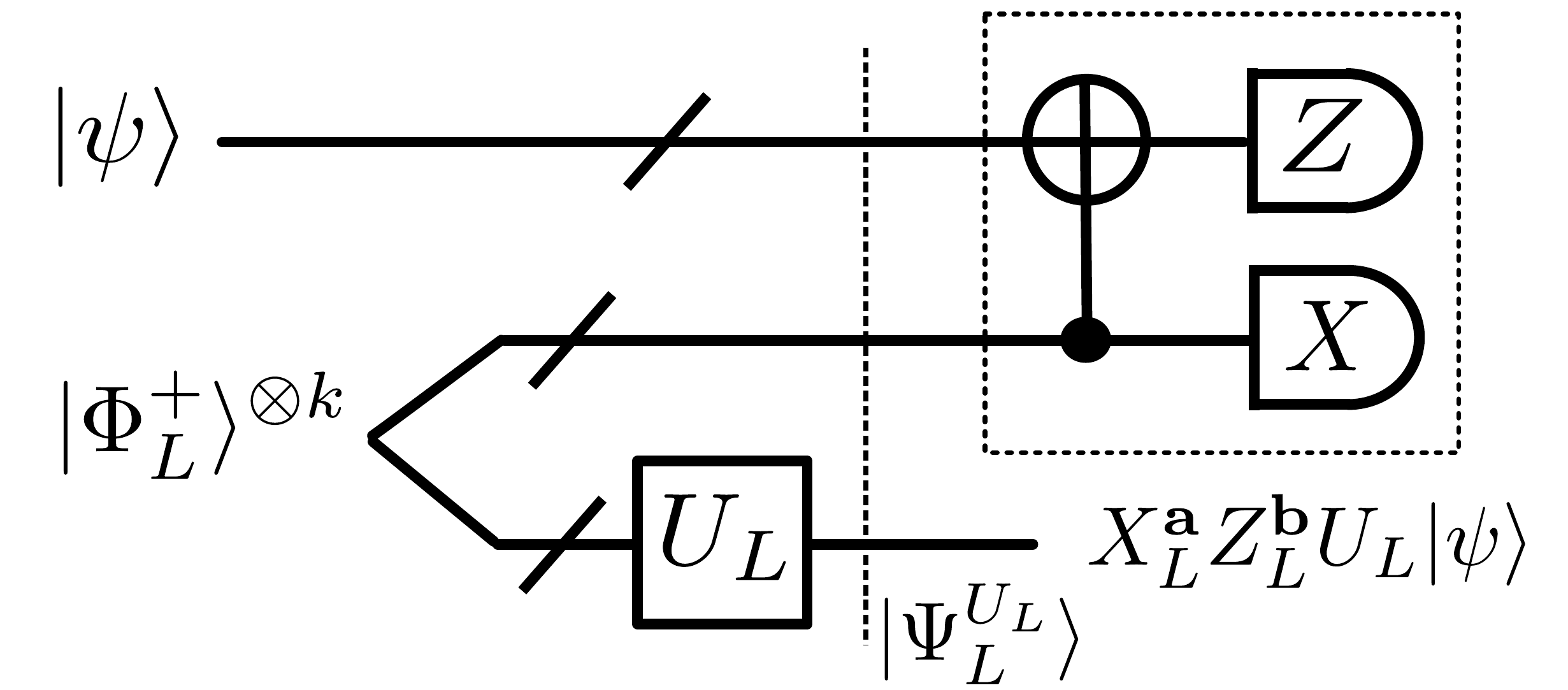}
\caption{\label{fig:Knill} The quantum circuit for Knill syndrome measurement and gate teleportation for an $\llb n, k,d \rrb$ stabilizer code. For a logical Clifford circuit $U_L$, if $|\Psi_L^{U_L}\>$ is prepared before logical Bell measurement, $U_L|\psi\>$ can be obtained up to some Pauli correction of $X_L^{\bf a}Z_L^{\bf b}$ on the output block, depending on the logical Bell measurement results.
}
\end{figure}

The logical Bell measurement in the dashed box teleports the encoded state to the lower ancilla block up to a logical Pauli correction (depending on the Bell measurement outcomes), and simultaneously obtains the error syndrome of on the input data blocks. Both logical Bell measurement outcomes and syndromes are calculated from the bitwise measurement results. The circuit is intrinsically fault-tolerant because it consists solely of transversal CNOT gates and bitwise measurements.

One particular virtue of the teleportation syndrome measurement circuit is that it can also provide a straightforward way to produce any logical circuit $U_L$ (on the teleported state) of the Clifford hierarchy $C_k$ (up to a $C_{k-1}$ correction depending on the logical measurement outcomes) via the very same syndrome measurement circuit~\cite{Gottesman:1999:390}, if one can construct the ancilla state
\beq
\left|\Psi_L^{U_L}\right\rangle=(I \otimes U_L)\left|\Phi^+ _L\right\rangle^{\otimes k}.
\eeq
This construction is very useful when implementing logical circuits for large block codes. In this paper, we focus on $U \in C_2$, the Clifford circuit. In this case, all the $|\Psi_L^{U_L}\rangle$ are stabilizer states that can be prepared by Clifford circuits.

\subsubsection{Steane syndrome measurement}

\begin{figure}[!htp]
\centering\includegraphics[width=55mm]{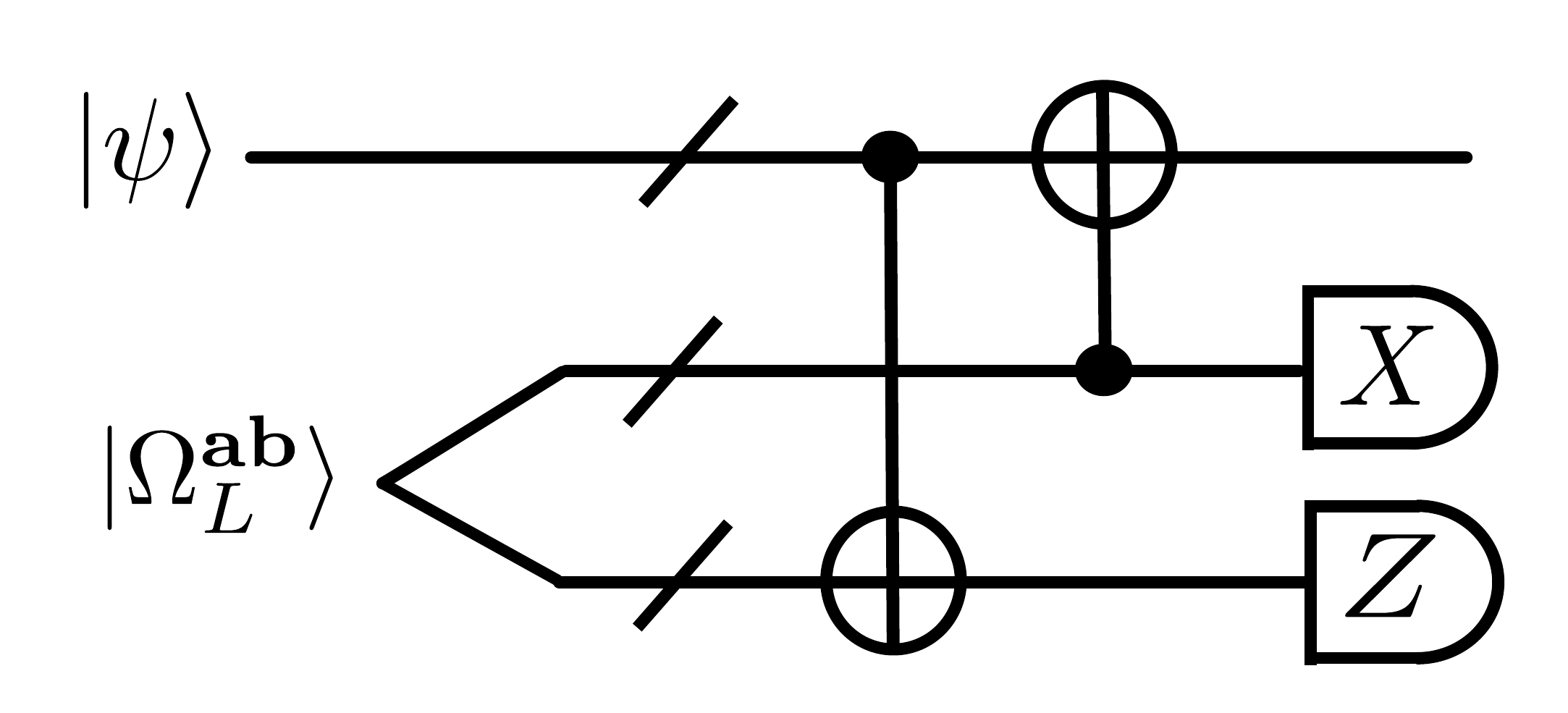}
\caption{\label{fig:Steane} The quantum circuit for Steane syndrome measurement using an $\llb n, k,d \rrb $ CSS code. One can simultaneously  measure a Hermitian Pauli operator $i^{\tau_{\bf ab}}X^{\bf a}_L Z^{\bf b}_L$, when the two ancilla blocks are prepared in the state $|\Omega^{\bf ab}_{L}\> =\frac{1}{\sqrt{2}}\left(I+i^{\tau_{\bf ab}}X^{\bf a}_L\otimes  Z^{\bf b}_L\right)|0_L\rangle^{\otimes k}\otimes |+_L\rangle^{\otimes k}$.}
\end{figure}

Now we consider a CSS code $\llb n, k,d \rrb$ for convenience in later discussion. For CSS codes, Steane suggested a syndrome measurement circuit as shown in Fig.~\ref{fig:Steane}~\cite{steane1997active}. Here, two
logical ancilla blocks of the same code are used that protects the data state. Two transversal CNOT gates propagate $Z$ and $X$ errors from the data block to ancilla blocks and corresponding error syndromes are calculated from the bitwise measurement outcomes. If the two ancilla blocks are prepared in a tensor product state $|0_L\>^{\otimes k}\otimes |+_L\>^{\otimes k}$, the circuit extracts the error syndromes without disturbing the encoded quantum information. Like the Knill syndrome measurement, the circuit is intrinsically fault-tolerant.

Moreover, one can simultaneously measure an arbitrary Hermitian logical  Pauli operator of the form $i^{\tau_{\bf ab}}X^{\bf a}_L Z^{\bf b}_L$ while extracting syndromes, if $|\Omega^{{\bf ab}}_L\rangle$ is prepared in
\beq
|\Omega^{\bf ab}_{L}\> =\frac{1}{\sqrt{2}}\left(I+i^{\tau_{\bf ab}}X^{\bf a}_L\otimes  Z^{\bf b}_L\right)|0_L\rangle^{\otimes k}\otimes |+_L\rangle^{\otimes k}.
\eeq

It is easy to prove the functionality of the circuit: start with the joint state $|\psi\>|\Omega^{{\bf ab}}_{L}\>$, after two transversal CNOTs, the state becomes
{\small
$$
\frac{1}{\sqrt{2}}\left(|\psi\> |0_L\>^{\otimes k}|+_L\>^{\otimes k} + i^{\tau_{\bf ab}} X^{\bf a}_LZ^{\bf b}_L |\psi\>  X^{\bf a}_L |0_L\>^{\otimes k} Z^{\bf b}_L|+_L\>^{\otimes k}\right).
$$
}
Let the measurement outcomes of the $j$th logical qubit in the upper and lower blocks be $v^{x}_j$ and $v^{z}_j\in \{0,1\}$, respectively. Then the joint output state is:
\begin{widetext}
\small
\beq
\begin{split}
&\frac{1}{\sqrt{2}}|\psi\>\bigotimes_{j=1}^k\left(\frac{I+(-1)^{v_j^x}X_L}{2}|0_L\>\right)
\bigotimes_{j=1}^{k}\left(\frac{I+(-1)^{v_j^z}Z_L}{2}|+_L\>\right)+\frac{1}{\sqrt{2}}i^{\tau_{\bf ab}}X_L^{\bf a}Z_L^{\bf b}|\psi\>\bigotimes_{j=1}^{k}\left(\frac{I+(-1)^{v_j^x}X_L}{2}X_L^{a_j}|0_L\>\right)
\bigotimes_{j=1}^{k}\left(\frac{I+(-1)^{v_j^z}Z_L}{2}Z_L^{b_j}|+_L\>\right)\\
=&\frac{1}{\sqrt{2}}\left(I+\prod_{l\in \text{supp}({\bf a})}(-1)^{v_l^x}\prod_{l\in \text{supp}({\bf b})}(-1)^{v_l^z}i^{\tau_{\bf ab}}X^a_LZ^b_L\right)|\psi\>\bigotimes_{j=1}^{k}\left(\frac{I+(-1)^{v_j^x}X_L}{2}|0_L\>\right)
\bigotimes_{j=1}^{k}\left(\frac{I+(-1)^{v_j^z}Z_L}{2}|+_L\>\right),\\
\end{split}
\eeq
\end{widetext}
which is the state after the measurement of $i^{\tau_{\bf ab}}X_L^{\bf a}Z_L^{\bf b}$ on $|\psi\>$ with measurement outcome
$$
\prod_{l\in \text{supp}({\bf a})}(-1)^{v_l^x}\prod_{l\in \text{supp}({\bf b})}(-1)^{v_l^z}.
$$
This circuit also allows measuring several commuting logical Pauli operator simultaneously. Here, we restrict ourselves to a commuting set of $m \leq k$ logical Pauli operators and suppose that the set of commuting Pauli operators to be measured is
$\{X_L^{{\bf e}_1}Z_L^{{\bf f}_1},\dots, X_L^{{\bf e}_m}Z_L^{{\bf f}_m} \}$.
These operators can be simultaneously measured by replacing $|\Omega^{\bf ab}_L\>$ with:
\beq\label{eq:state_multi_measurement}
|\Omega^{{\bf EF}}_L\>=\frac{1}{\sqrt{2^m}}\prod_{j=1}^{m}
\left(I+i^{\tau_{{\bf e}_j{\bf f}_j}}X_L^{{\bf e}_j}\otimes Z_L^{{\bf f}_j}\right)|0_L\>^{\otimes k}\otimes |+_L\>^{\otimes k}.
\eeq
Note that $|\Omega^{{\bf EF}}_L\>$ is also a stabilizer state. Like logical circuit teleportation, one can also effectively construct any logical Clifford circuit via such Pauli measurements~\cite{Gottesman:9705052,brun2015teleportation}.

\subsection{Single-shot FT logical circuit teleportation and Pauli measurement}
Ideally, if the ancilla qubits are clean and measurements are perfect, one can extract the error syndrome of the data block with logical circuit teleportation or Pauli measurements in a single round of Knill/Steane syndrome measurement.

In practice, ancillas may contain different types of errors after preparation, while the measurement outcomes can also be noisy. One needs to make sure that high weight errors do not propagate from ancilla qubits to data blocks. At the same time, reliable values of syndromes and logical operators must be established from measurement outcomes. For error correction, one can repeat the syndrome measurements several rounds to establish reliable syndromes of the data state via majority vote~\cite{Shor:1996:56, Steane:2003:042322}. However, for the purpose of logical circuit teleportation or Pauli measurements, one needs reliable values of logical operators right after the first round of measurement for subsequent correction. Otherwise, it will cause a logical error on the data state. Thus, a single-shot FT logical circuit teleportation or Pauli measurement protocol is required.

Actually, we will see this is possible if the blocks of ancilla qubits for Knill/Steane syndrome measurements do not contain any \emph{correlated} errors. Here, we define \red{an} uncorrelated error as follow~\cite{steane2002fast}:
\begin{mydef}\label{def:uncorrelatation}
For an $\llb n,k,d \rrb$ code correcting any Pauli error on $t=\lfloor \frac{d-1}{2} \rfloor$ qubits, we say that an error $E$ on the code block is spatially \emph{uncorrelated} if the probability of $E$ is
$$
\Pr(E)\sim O(p^s):
\begin{cases}
&\text{for some }s\geq\text{wt}(E),\ \text{if wt}(E)\leq t; \\
&\text{for some }s \geq t,\ \text{if wt}(E)> t,
\end{cases}
$$
where the coefficients behind $O$ are not unreasonably large.
\end{mydef}
Otherwise, $E$ is said to be correlated. For those uncorrelated errors satisfying this definition, they should have a distribution similar to the binomial distribution. Thus, the errors on the code block can be regarded as independent. We say that an ancilla is \emph{qualified} if it is free of correlated errors.

\begin{figure}[!htp]
\centering\includegraphics[width=60mm]{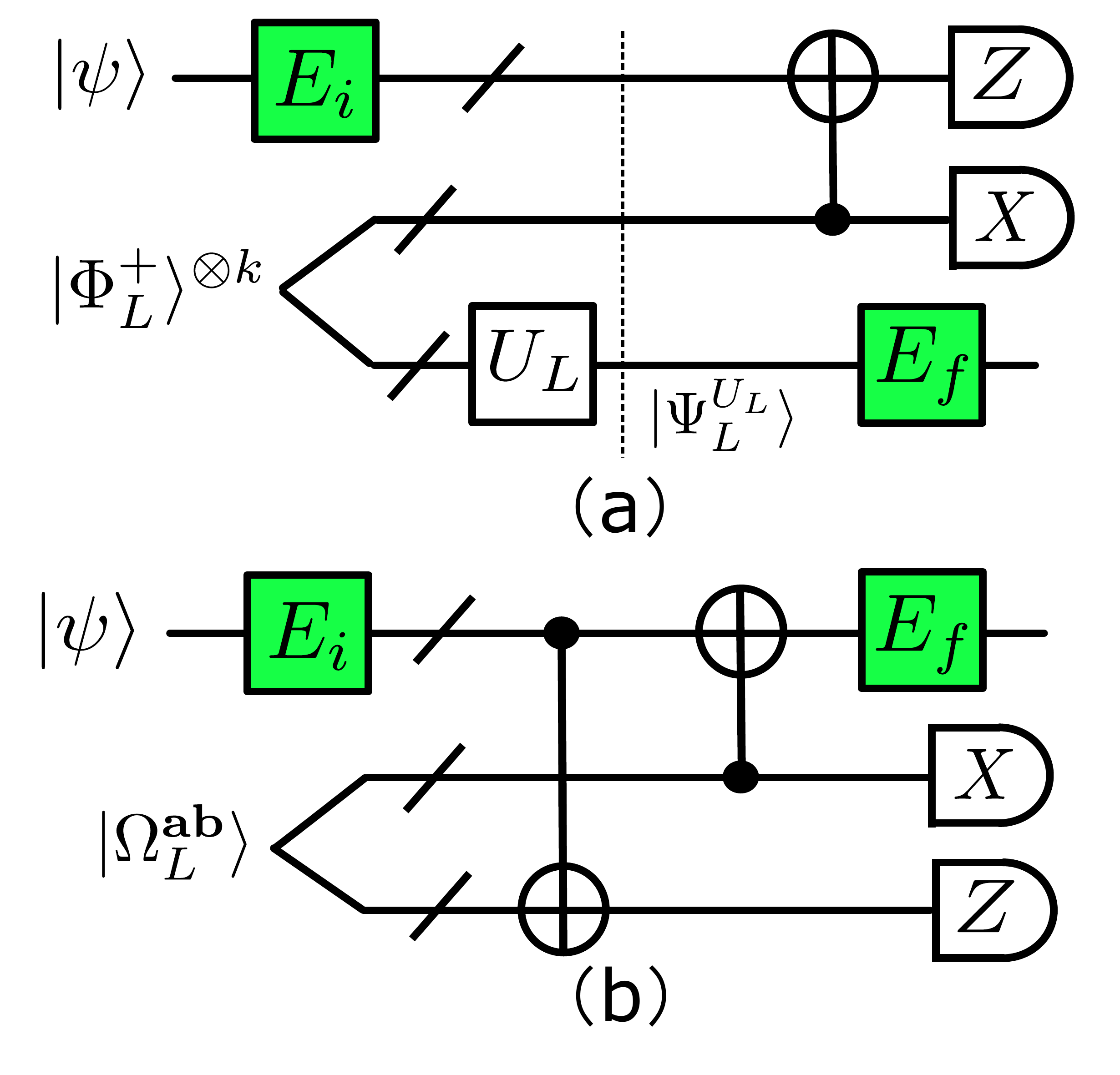}
\caption{\label{fig:equivalent_error} The effective error model of the Knill (part (a)) and Steane (part(b)) syndrome measurement circuits. While $E_i$ and $E_f$ are in general correlated \emph{in time}, they are \emph{spatially} uncorrelated, if the ancilla states are qualified.
}
\end{figure}
It is obvious that no correlated error will be propagated back to the data blocks if ancilla blocks are qualified. Even more, we have:
\begin{mylemma}[Effective error model]\label{lemma:effective_error}
During imperfect logical state teleporation via Knill syndrome extraction, or logical Pauli measurements via Steane syndrome measurement, if errors in the same block (data or ancilla) are spatially uncorrelated according to Def.~\ref{def:uncorrelatation}, then the errors are equivalent to spatially \emph{uncorrelated} effective errors acting only on the data code block  before and after the process, as shown in Fig.~\ref{fig:equivalent_error}. 
\end{mylemma}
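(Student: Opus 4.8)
\noindent The plan is to track how faults at every location of the Knill (resp.\ Steane) circuit propagate through the transversal CNOTs and bitwise measurements, and to show that the net effect is captured by a pair of Pauli errors $E_i$, $E_f$ supported on the data code block alone, each of which inherits the spatial-uncorrelatedness hypothesis. First I would fix notation for a single run: label the fault locations as (i)~residual errors on the incoming data block, (ii)~preparation errors on the logical Bell pair $|\Phi_L^+\rangle^{\otimes k}$ (resp.\ on $|0_L\rangle^{\otimes k}\otimes|+_L\rangle^{\otimes k}$), (iii)~gate faults on the transversal CNOTs, and (iv)~measurement faults (bit flips on the classical outcomes, equivalently Pauli errors just before the bitwise measurements). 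Because every gate in the circuit is a CNOT and every measurement is in a Pauli eigenbasis, each such fault is (up to the teleportation byproduct, which is a logical Pauli already accounted for in the circuit's stated action) a Pauli operator, and I can commute all of them to the two ``time slices'' depicted in Fig.~\ref{fig:equivalent_error}: an error $E_i$ acting on the data block \emph{before} the transversal CNOTs and an error $E_f$ acting on the data block \emph{after} them.

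\noindent The key structural point is \emph{transversality}: a CNOT fault, a preparation fault, or a measurement fault on the $j$th physical qubit of an ancilla block propagates only to the $j$th physical qubit of the data block (and conversely for the opposite CNOT orientation), never mixing distinct qubit positions. Hence, if I write $E_i = E_i^{\rm data}\cdot E_i^{\rm anc\text{-}prop}$ and similarly for $E_f$, the ``ancilla-propagated'' factor is a product of single-qubit Paulis indexed by the qubit positions where ancilla or gate faults occurred, with the same support (as a set of qubit indices) as the originating ancilla/gate fault. I would then invoke the qualification hypothesis: since the ancilla blocks are free of correlated errors, the ancilla preparation error on any given set of $j$ positions occurs with probability $O(p^{s})$ for some $s\geq j$ (for $j\leq t$), and gate/measurement faults on $j$ positions are $O(p^{j})$ by the standard independent-fault assumption; the data-block residual $E_i^{\rm data}$ is uncorrelated by assumption. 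A product of independent uncorrelated contributions (in the sense of Def.~\ref{def:uncorrelatation}) is again uncorrelated, with at worst a modest blow-up of the constants from convolving binomial-type tails — this is where I would be slightly careful but the estimate is routine. Thus $E_i$ and $E_f$ are each spatially uncorrelated on the data block.

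\noindent It remains to argue that \emph{only} $E_i$ and $E_f$ on the data block matter — i.e.\ that errors which end up entirely on the teleported-out ancilla block or that are absorbed into the classical syndrome/byproduct bookkeeping do not spoil the conclusion. For Knill teleportation this follows because any Pauli on the outgoing block can be pushed past the (ideal part of the) Bell measurement and commuted onto the data block as part of $E_f$, while any error that is purely stabilizer-equivalent on an ancilla block acts trivially; for Steane measurement the same commutation argument applies through the two transversal CNOTs. I would also note that $E_i$ and $E_f$ are \emph{temporally} correlated — a single gate fault can contribute to both — which is exactly the content of the ``correlated in time'' clause in Fig.~\ref{fig:equivalent_error}; this correlation is harmless for single-shot FT decoding because it only links the ``before'' and ``after'' snapshots of the \emph{same} physical qubit positions. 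The main obstacle, and the part deserving the most care, is the bookkeeping in step three: ensuring that when faults from several independent locations land on overlapping qubit positions the combined error still satisfies the $O(p^{s})$ bound with $s\geq \mathrm{wt}(\cdot)$ and that the hidden constants stay ``not unreasonably large'' in the sense of Def.~\ref{def:uncorrelatation}; everything else is a direct consequence of transversality plus Pauli propagation through Clifford gates.
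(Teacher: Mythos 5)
Your proposal is correct and takes essentially the same route as the paper, which simply cites Ref.~\cite{brun2015teleportation} for the Steane case and notes that ``failures occurring in any location of the circuit can be commuted forward or backward to the data code block,'' then asserts the same holds for Knill. Your write-up is a faithful, more detailed reconstruction of exactly that commutation-plus-transversality argument, including the observation that $E_i$ and $E_f$ remain temporally correlated but spatially uncorrelated.
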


It has already been shown in Ref.~\cite{brun2015teleportation} that this statement is true for Steane syndrome measurement. The basic idea is that failures occurring in any location of the circuit can be commuted forward or backward to the data code block, allowing the ancillas to be treated as clean and the measurements as perfect. Thus we can leave $E_f$ to the next round of syndrome measurements and analyze as if only $E_i$ (and $E_f$ from the previous round) have occurred, followed by perfect syndrome measurements. The same argument is also applicable to Knill syndrome measurements and hence one has:
\begin{mytheorem}\label{thm:single-shot}
The Knill/Steane syndrome measurement circuit can implement fault-tolerant logical circuit teleportation/Pauli measurements in a single round.
\end{mytheorem}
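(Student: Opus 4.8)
The plan is to reduce Theorem~\ref{thm:single-shot} to Lemma~\ref{lemma:effective_error} by showing that the effective error model guarantees a correct single-shot decode. First I would make precise what ``implement in a single round'' means: after one pass of the Knill (resp. Steane) circuit with a qualified pair of ancilla blocks, the classical outcomes determine (a) the error syndrome of the data block and (b) the logical Bell-measurement outcomes (resp. the eigenvalues of the measured commuting logical Paulis), in such a way that applying the indicated logical Pauli correction yields $U_L\ket{\psi}$ (resp. the correctly projected state) with only a spatially uncorrelated residual error $E_f$ left on the output block. By Lemma~\ref{lemma:effective_error}, the entire noisy process is equivalent to: a spatially uncorrelated $E_i$ hits the data block, then a \emph{perfect} circuit runs, then a spatially uncorrelated $E_f$ hits the output block. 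This is exactly the hypothesis under which the idealized single-round analysis of Sec.~\ref{sec:ft_syndrome} applies.

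The key steps, in order, are: (i) invoke Lemma~\ref{lemma:effective_error} to replace the noisy Knill/Steane circuit by the idealized picture of Fig.~\ref{fig:equivalent_error}, pushing all ancilla-preparation faults and measurement faults onto $E_i$ and $E_f$; (ii) observe that with a perfect circuit and perfect bitwise measurements, the computation of Sec.~\ref{sec:stabilizer_circuit} shows that the logical outcomes (Bell outcomes, or the products $\prod_{l\in\text{supp}({\bf a})}(-1)^{v_l^x}\prod_{l\in\text{supp}({\bf b})}(-1)^{v_l^z}$ and their analogues) are exactly the desired logical operator eigenvalues, so the Pauli correction specified by the circuit is the correct one; (iii) note that because $E_i$ is spatially uncorrelated it has weight $\le t$ with the dominant probability, so the syndrome extracted in this same round correctly identifies (the class of) $E_i$, and the recovery succeeds; (iv) conclude that the only thing left over is $E_f$ on the output block, which by Lemma~\ref{lemma:effective_error} is again spatially uncorrelated and can be carried into the next round exactly as a fresh input error would be --- hence no accumulation and no logical fault, establishing fault tolerance. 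For the $U_L$-teleportation case one additionally checks that conjugating $\ket{\Psi_L^{U_L}} = (I\otimes U_L)\ket{\Phi_L^+}^{\otimes k}$ through the Bell measurement yields $U_L\ket{\psi}$ up to the announced $X_L^{\bf a}Z_L^{\bf b}$, which is the standard gate-teleportation identity applied at the logical level and uses only that $U_L$ is Clifford so that the correction stays Pauli.

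The main obstacle, and the place where I would spend the most care, is step (iv) combined with the ``time-correlated but space-uncorrelated'' subtlety already flagged in Fig.~\ref{fig:equivalent_error}: one must argue that even though $E_i$ and $E_f$ in a given round may be correlated \emph{with each other}, the decoder only ever needs the spatial weight of the error acting on a single block at a single time, and that weight obeys Def.~\ref{def:uncorrelatation}. Concretely I would verify that the fault-propagation analysis of Ref.~\cite{brun2015teleportation} for Steane --- commuting each circuit fault forward or backward to the data block --- carries over verbatim to the Knill teleportation circuit, using that it too consists solely of transversal CNOTs and bitwise measurements, so that a single physical fault maps to an error of weight $O(1)$ on at most the data block and one output block, preserving the binomial-type scaling. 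Once that propagation lemma is in hand for both circuits, Theorem~\ref{thm:single-shot} follows immediately as the composition of (i)--(iv); the remainder is bookkeeping that I would not grind through here.
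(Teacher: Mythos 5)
Your proposal is correct and follows essentially the same route as the paper: the paper likewise reduces the theorem to Lemma~\ref{lemma:effective_error}, citing the fault-propagation argument of Ref.~\cite{brun2015teleportation} (commuting every circuit fault onto the data block so the ancillas and measurements can be treated as perfect), deferring $E_f$ to the next round, and asserting that the same argument transfers to the Knill circuit because it too consists only of transversal CNOTs and bitwise measurements. Your write-up is in fact more explicit than the paper's about steps (ii)--(iv) and about the time-correlated/space-uncorrelated subtlety, but the underlying argument is identical.
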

After a single-shot Knill/Steane syndrome measurement and correction, the final data state is:
\beq\label{eq:final_state}
|\psi_f\> \propto E_f \cdot  R \cdot O_L \cdot L_C \cdot E_{\text{comb}}|\psi_i\>.
\eeq
Here, $E_{\text{comb}}$ includes both $E_i$ in current stage and $E_f$ from previous stage; $O_L$ is the logical operation (either a teleported logical circuit, or logical Pauli measurements); $L_C$ is the Pauli correction based on the outcomes of logical measurements; $R$ is the recovery operation based on the measured syndromes, $O_L$ and $L_C$.


%


\subsection{Constant depth Clifford circuit via FT circuit teleportation}\label{sec:constant_teleport}
For a CSS code with $k$ logical data qubits, it requires $O(k^2/\log k)$ logical Clifford gates~\cite{aaronson2004improved, markov2008optimal} for a logical Clifford circuit. If we implement these gates one by one in a fault-tolerant manner, it will require $O(k^2/\log k)$ qualified ancilla states using $O(k^2/\log k)$ times of the Knill/Steane single-shot syndrome measurements circuit. In this and next subsections, we show that $O(1)$ qualified ancilla states and $O(1)$ steps of the Knill/Steane syndrome measurements are sufficient for arbitrary logical Clifford circuits, up to a permutation of qubits.

It is well known that any Clifford circuit has an equivalent circuit comprising 11 stages, each using only one type of gate: -H-C-P-C-P-C-H-P-C-P-C-~\cite{aaronson2004improved}. That can be further reduced to a 9-stage -C-P-C-P-H-P-C-P-C-~\cite{maslov2017Bruhat}. More specifically, one has:
\begin{mytheorem}[Bruhat decomposition~\cite{maslov2017Bruhat}]\label{thm:bruhat}
Any symplectic matrix $M$ of dimension $2k\times 2k$ can be decomposed as
\beq\label{eq:used_bruhat}
\begin{split}
M=&M^{(1)}_{C} M^{(1)}_{P}M^{(2)}_{C}M^{(2)}_{P} M^{(1)}_{H} \cdot\\
&M^{(3)}_{P} \left(\pi M^{(3)}_{C}\pi^{-1}\right)M^{(4)}_{P} \left(\pi M^{(4)}_{C}\pi^{-1}\right)\pi.
\end{split}
\eeq
Here, $M^{(j)}_{C} $ are -C- stage matrices containing only CNOT gates $\text{C}(q,r)$ such that $q < r$; $M^{(j)}_{P} $ and $M^{(j)}_{H} $ are -P- and -H- stage matrices; $\pi$ is a permutation matrix.
\end{mytheorem}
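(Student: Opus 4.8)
The plan is to reduce the statement to the structure theory of $\mathrm{Sp}(2k,\mathbb{Z}_2)$ and to read the decomposition off a Bruhat form. Using the gate-to-column dictionary of Sec.~\ref{sec:stabilizer_circuit}, each stage type corresponds to a tightly constrained family of symplectic matrices: a -C- stage built only from gates $\mathrm{C}(q,r)$ with $q<r$ is the embedded matrix $\left(\begin{smallmatrix}U & 0\\ 0 & (U^t)^{-1}\end{smallmatrix}\right)$ with $U$ upper unitriangular (over $\mathbb{Z}_2$ every invertible upper-triangular matrix is unitriangular); a -P- stage (Phase and controlled-$Z$ gates) is a symmetric shear $\left(\begin{smallmatrix}I & B\\ 0 & I\end{smallmatrix}\right)$ with $B=B^t$; an -H- stage applying Hadamard on a subset $T\subseteq\{1,\dots,k\}$ is the involution swapping $e_i\leftrightarrow e_{k+i}$ for $i\in T$; and the permutation matrix $\pi$ acts simultaneously on the two halves $\{1,\dots,k\}$ and $\{k{+}1,\dots,2k\}$. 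These four families generate $\mathrm{Sp}(2k,\mathbb{Z}_2)$, so the content of Theorem~\ref{thm:bruhat} is that the \emph{specific ordered product} in (\ref{eq:used_bruhat}) already exhausts the whole group.

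First I would invoke the Bruhat decomposition of $\mathrm{Sp}(2k,\mathbb{Z}_2)$. Let $\mathcal{B}$ be the subgroup generated by the -C- and -P- families; a short symplectic-condition calculation gives $\mathcal{B}=\{\left(\begin{smallmatrix}U & UB\\ 0 & (U^t)^{-1}\end{smallmatrix}\right): U \text{ upper unitriangular},\ B=B^t\}$, a Borel subgroup, with $\mathcal{B}=\mathcal{C}\mathcal{P}=\mathcal{P}\mathcal{C}$, so every element of $\mathcal{B}$ factors as a -C- stage followed by a -P- stage (or the reverse). The Weyl group of type $C_k$ is the hyperoctahedral group $(\mathbb{Z}_2)^{k}\rtimes S_k$, whose monomial representatives inside $\mathrm{Sp}(2k,\mathbb{Z}_2)$ are exactly the matrices $H_T\,\pi$ (Hadamards on a subset, then a simultaneous permutation). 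Hence $M$ lies in a double coset $\mathcal{B}\,(H_T\pi_w)\,\mathcal{B}$, i.e.\ $M=b_1\,H_T\,\pi_w\,b_2$ with $b_1,b_2\in\mathcal{B}$.

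Next I would assemble the stated nine-stage form. Put $M_H^{(1)}=H_T$; factor $b_1=M_C^{(1)}M_P^{(1)}$ and pad with identity stages to the head $M_C^{(1)}M_P^{(1)}M_C^{(2)}M_P^{(2)}$; and push the permutation to the far right by conjugation, $H_T\,\pi_w\,b_2=H_T\,(\pi_w b_2\pi_w^{-1})\,\pi_w$. Since $\mathcal{B}$ is a subgroup and $\pi_w\mathcal{B}\pi_w^{-1}$ is again a (conjugate) Borel with $\pi_w\mathcal{P}\pi_w^{-1}=\mathcal{P}$, the element $\pi_w b_2\pi_w^{-1}$ can be written as $M_P^{(3)}\,(\pi M_C^{(3)}\pi^{-1})\,M_P^{(4)}\,(\pi M_C^{(4)}\pi^{-1})$ with $\pi=\pi_w$, which gives (\ref{eq:used_bruhat}). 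Here the conjugated -C- stages use gates $\mathrm{C}(\pi(q),\pi(r))$, and conjugating a symmetric shear by a permutation is again a symmetric shear, so no stage leaves its allowed family.

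The main obstacle is that everything above rests on the Bruhat decomposition of $\mathrm{Sp}(2k,\mathbb{Z}_2)$, which a self-contained proof must supply; I would do this constructively by \emph{symplectic Gaussian elimination}, clearing the off-diagonal blocks of $M$ by left and right multiplication with the allowed generators in exactly the order dictated by (\ref{eq:used_bruhat}). The rank of the lower-left block of (the partially reduced) $M$ controls $|T|$, the pivot pattern fixes $\pi$, and one must split the rows/columns to be cleared between the left and right factors so that all corrections land in the allotted -C- and -P- stages with controls strictly below targets. Verifying that this bookkeeping closes up --- in particular that the two interior -P- stages absorb every symmetric correction produced on the two sides of $M_H^{(1)}$, and that no elimination step ever demands a forbidden control/target order --- is where the real effort is; the surrounding group theory is only the skeleton. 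If one also wants the $O(k)$-depth and $O(k^2/\log k)$-gate counts used later in the paper, one further needs the (known) efficient syntheses of an arbitrary CNOT (i.e.\ $\mathrm{GL}_k$) layer and of a symmetric phase layer, applied stagewise.
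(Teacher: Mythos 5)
The paper itself offers no proof of this theorem --- it is imported verbatim from Ref.~\cite{maslov2017Bruhat} --- and your Bruhat-decomposition skeleton is indeed the route taken in that reference. However, as written your argument has a genuine gap, and it sits exactly where the 9-stage count comes from. You quietly enlarge the -P- stage to ``Phase and controlled-$Z$ gates'', i.e.\ to arbitrary symmetric shears $\left(\begin{smallmatrix}I&B\\0&I\end{smallmatrix}\right)$ with $B=B^t$. Under that definition $\mathcal{B}=\mathcal{C}\mathcal{P}$ is immediate, and your own remark that you must ``pad with identity stages'' shows you are really proving a shorter -C-P-H-P-C-$\pi$ decomposition. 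But the theorem (and Sec.~III, which builds the ancilla $\Psi_L^{U_P}$ from it) defines a -P- stage as a layer of at most $k$ \emph{single-qubit} Phase gates, with symplectic matrix $\left(\begin{smallmatrix}I_k&\Lambda_k^m\\ \mathbf{0}&I_k\end{smallmatrix}\right)$ and $\Lambda_k^m$ diagonal. With that definition, $\mathcal{C}\mathcal{P}=\left\{\left(\begin{smallmatrix}U&U\Lambda\\ \mathbf{0}&(U^t)^{-1}\end{smallmatrix}\right)\right\}$ is a proper subset of the Borel subgroup and is not even closed under multiplication, so the claim ``$\mathcal{B}=\mathcal{C}\mathcal{P}=\mathcal{P}\mathcal{C}$'' fails and the double-coset argument does not by itself deliver Eq.~(\ref{eq:used_bruhat}).

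The reason each side of the -H- stage carries \emph{two} -C-P- pairs is precisely to realize an arbitrary symmetric block $B$ as $V\Lambda V^t+\Lambda'$ with $V$ upper unitriangular and $\Lambda,\Lambda'$ diagonal --- a $\mathbb{Z}_2$ Cholesky-type lemma. (For instance $B=\left(\begin{smallmatrix}0&1\\1&0\end{smallmatrix}\right)$ is not of the form $V\Lambda V^t$ for any such pair, but equals $V\Lambda V^t+\Lambda'$ with $V=\left(\begin{smallmatrix}1&1\\0&1\end{smallmatrix}\right)$, $\Lambda=\mathrm{diag}(0,1)$, $\Lambda'=I_2$.) That lemma, together with checking that the resulting triangular factors respect the $q<r$ control/target constraint, is the actual content of the Maslov--Roetteler proof and is exactly the part your sketch defers to ``bookkeeping''. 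So the group-theoretic skeleton is sound, but the step you dismiss as a short symplectic calculation is where the theorem lives; as it stands, the proposal establishes a decomposition with CZ-augmented -P- stages, which is not the statement the rest of the paper uses.
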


In a -P- stage, since $\text{P}^4=I_2$, effectively there are three types of single-qubit gates: P, P$^2=Z$ and $\text{P}^3=\text{P}^\dag=\text{P}Z$. Note that we will postpone all the $Z$ gates to the final stage, and thus the -P- layer consists of at most $k$ individual Phase gates. The symplectic matrix of a -P- stage on a set of $m$ qubits is in general of the form:
\beq
M_P=\left(
  \begin{array}{c|c}
    I_k      &  \Lambda_k^m \\[2pt]
    \hline
    \rule[0.4ex]{0pt}{8pt}
    {\bf 0}_k &  I_k\\
  \end{array}
\right),
\eeq
where $\Lambda_k^m$ is an $k\times k$ diagonal matrix with $m$ 1s.

Similar to the -P- stage,  since $\text{H}^2=I_2$, an -H- stage  contains at most $k$ individual H gates. The symplectic matrix of an -H- stage on an arbitrary set of $m$ qubits can be written as
\beq
M_H=\left(
  \begin{array}{c|c}
    I_k + \Lambda_k^m & \Lambda_k^m \\[2pt]
    \hline
    \rule[1.0ex]{0pt}{8pt}
    \Lambda_k^m & I_k+\Lambda_k^m \\
  \end{array}
\right).
\eeq

The corresponding symplectic matrix of a -C- stage can be written as:
\beq
M_C=\left(
  \begin{array}{c|c}
     U      &  {\bf 0}_k\\[2pt]
    \hline
    \rule[1.0ex]{0pt}{8pt}
    {\bf 0}_k &  \left( U^{t}\right)^{-1}\\
  \end{array}
\right),
\eeq
where $U$ is an invertible $k\times k$ upper triangular matrix.

Clearly, if one can implement each stage in $O(1)$ steps fault-tolerantly, an arbitrary logical Clifford circuit can be implemented in $O(1)$ steps. For Knill syndrome measurements, it is straightforward---one can prepare the ancilla for the circuit in each stage directly as:
\beq
\begin{split}
\left|\Psi_{L}^{U_P}\right\>&=I\otimes U_P(|0_L\>\otimes|0_L\>+|1_L\>\otimes |1_L\>)^{\otimes k},\\
\left|\Psi_{L}^{U_H}\right\>&=I\otimes U_H(|0_L\>\otimes|0_L\>+|1_L\>\times |1_L\>)^{\otimes k},\\
\left|\Psi_{L}^{U_C}\right\>&=I\otimes U_C(|0_L\>\otimes|0_L\>+|1_L\>\otimes|1_L\>)^{\otimes k}
\end{split}
\eeq
where $U_P$, $U_H$ and $U_C$ are the corresponding unitaries for the -P-, -H- and -C- stages, respectively. Obviously, these are all CSS states up to local Clifford operations, whose binary representations at the logical level are:
\beq
\Psi_L^{U_P}=\left(
  \begin{array}{cc|cc}
    I_k & I_k & {\bf 0} & \Lambda_k^m \\[2pt]
    {\bf 0} & {\bf 0} & I_k & I_k\\
  \end{array}
\right),
\eeq


\beq
\Psi_L^{U_H}=\left(
                   \begin{array}{cc|cc}
                     I_k & I_k+\Lambda_k^m & {\bf 0} & \Lambda_k^m \\[2pt]	                 {\bf 0} & \Lambda_k^m & I_k & I_k+\Lambda_k^m \\
                   \end{array}
                 \right)
\eeq
assuming -P- or -H- is applied to a set of $m$ qubits, and
\beq
\Psi_L^{U_C}=\left(
  \begin{array}{cc|cc}
    I_k & U & {\bf 0} & {\bf 0} \\
    {\bf 0} & {\bf 0} & I_k &  \left(U^t\right)^{-1} \\
  \end{array}
\right).
\eeq
If these states are all qualified for all the stages, by Theorem.~\ref{thm:single-shot} and \ref{thm:bruhat}, one can implement an arbitrary logical Clifford circuit in 9 rounds of single-shot Knill syndrome measurements. Later, we will show that all the three types of ancilla states can be prepared fault-tolerantly and efficiently.

\subsection{Constant depth Clifford circuit FT Pauli measurement}\label{sec:constant_pauli_measurement}
Unlike Knill syndrome measurement, it is not so obvious how to implement the logical Clifford group using Steane syndrome measurement. In this section, we provide a constructive proof showing that by introducing $k$ extra auxiliary logical qubits (labeled as $\text{A}_1,\dots, \text{A}_k$), each stage of a logical Clifford circuit on $k$ data logical qubits ($\text{Q}_1,\dots, \text{Q}_k$) can be implemented via a constant number of Pauli operator measurements, up to a permutation of qubits. We choose an $\llb n, 2k, d\rrb$ CSS code and put the logical qubits in the following order:  $\{\text{A}_1,\dots, \text{A}_k, \text{Q}_1,\dots, \text{Q}_k\}$.

\subsubsection{-P- stage}
Consider a pair of qubits $\{\text{A}_j,\text{Q}_j\}$ with $\text{A}_j$ in the $|0_L\>$ state. Measure operators $X_{{\text{A}_j},L}Y_{\text{Q}_{j},L}$ and then $Z_{\text{Q}_j,L}$. After swapping $\text{A}_j$ and $\text{Q}_j$, the overall effect is a Phase gate
on $\text{Q}_j$ up to a Pauli correction depending on the measurement
outcomes. The swap does not need to be done physically.
Instead, one can just keep a record of it in software.

For $m$ Phase gates on a logical qubit set $\mathscr{M}$, since $\{X_{{\text{A}_j},L}Y_{\text{Q}_j,L}|\ j\in\mathscr{M}\}$ and $\{Z_{\text{Q}_j,L} |\ j\in\mathscr{M}\}$ are commuting operator sets, it requires only two steps of Pauli measurements by preparing two ancilla states with $4k$ logical qubits:
{\small
\beq\label{eq:phase_ancilla}
\begin{split}
&\left|\Omega^{P_1}_L\right\>\\
=&\frac{1}{\sqrt{2^m}}\prod_{j\in \mathscr{M}}
\left(I+i\left(X_{j,L}X_{j+k,L}\right)\otimes Z_{j+k,L}\right)|0_L\>^{\otimes 2k}\otimes |+_L\>^{\otimes 2k}
\end{split}
\eeq
}
and
\beq
\left|\Omega^{P_2}_L\right\>=\frac{1}{\sqrt{2^m}}|0_L\>^{\otimes 2k}\otimes \left(\prod_{j\in \mathscr{M}}\left(I+ Z_{j,L}\right)|+_L\>^{\otimes 2k}\right),
\eeq
whose binary representations at the logical level are
\beq
\Omega^{P_1}_L=\left(
  \begin{array}{cccc|cccc}
    {\bf 0} & {\bf 0} & {\bf 0} & {\bf 0} & I_k & \Lambda_k^m & {\bf 0} & {\bf 0} \\[2pt]
    \Lambda_k^m &\Lambda_k^m &{\bf 0} &{\bf 0} &{\bf 0} & I_k &{\bf 0} & {\Lambda}_k^m \\[2pt]
    {\bf 0} & {\bf 0} & I_k & {\bf 0} & {\bf 0}& {\bf 0} & {\bf 0}_k & {\bf 0}\\[2pt]
    {\bf 0}& {\bf 0} & {\bf 0} & I_k & {\bf 0} & \Lambda_k^m & {\bf 0}&  {\bf 0}_k\\
  \end{array}
\right),
\eeq
and
\beq
\Omega^{P_2}_L=\left(
\begin{array}{ccc|ccc}
{\bf 0}_{2k} & {\bf 0} & {\bf 0} & I_{2k}  & {\bf 0} & {\bf 0} \\[2pt]
{\bf 0} & I_k & {\bf 0 } & {\bf 0} & {\bf 0}_k& {\bf 0} \\[2pt]
{\bf 0}& {\bf 0} & I_k+\Lambda_k^m & {\bf 0}& {\bf 0} & \Lambda_{k}^m  \\
\end{array}
\right),
\eeq
respectively. Note that $|\Omega_L^{P_2}\>$ is a CSS state. $|\Omega_L^{P_1}\>$ is the joint $+1$ eigenstate of
{\small
$$
\{Z_{j,L} Z_{j+k,L}\otimes I_{2n}, Z_{j+k,L}\otimes X_{j+k,L},X_{j,L}Y_{j+k,L}\otimes Z_{j+k,L}\ |\ j\in \mathscr{M}\},
$$
}
which is also a CSS state up to Phase gates on logical qubits $\{j+k| \ j\in \mathscr{M}\}$ of the upper block, and Hadamard gates on the logical qubits  $\{j+k| \ j\in \mathscr{M}\}$ of the lower block.

\subsubsection{-H- stage}
Like the -P- stage, we  consider only a single H on a data qubit. For a pair of qubits $\{\text{A}_j,\text{Q}_j\}$ with $\text{A}_j$ in the $|0_L\>$ state,
measure $X_{\text{A}_j,L}Z_{\text{Q}_j,L}$ and then $X_{\text{Q}_j,L}$.
After swapping $\text{A}_j$ and $\text{Q}_j$ , the overall effect is a Hadamard
gate on $\text{Q}_j$ with $\text{A}_j$ in the  $|+_L\>$ up to a Pauli correction depending
on the measurement outcome.

For $m$ Hadamard gates on a logical qubits set $\mathscr{M}$,
since $\{X_{\text{A}_j,L}Z_{\text{Q}_j,L}\ | \ j\in \mathscr{M} \}$ and $\{X_{\text{Q}_j,L}\ | \ j\in \mathscr{M}\}$ are both commuting sets, we need just two steps of Pauli measurements and an ancilla state with $4k$ logical qubits for an -H- stage. If Hadamard gates are applied to a set $\mathscr{M}$ of qubits, the required ancilla states are
\beq\label{eq:had_ancilla}
\left|\Omega^{H_1}_L\right\>=\frac{1}{\sqrt{2^m}}\prod_{j\in \mathscr{M}}
\left(I+X_{j,L}\otimes Z_{j+k,L}\right)|0_L\>^{\otimes 2k}\otimes |+_L\>^{\otimes 2k},
\eeq
and
\beq
\left|\Omega^{H_2}_L\right\>=\frac{1}{\sqrt{2^m}}\left(\prod_{j\in \mathscr{M}}
\left(I+ X_{j,L}\right)|0_L\>^{\otimes 2k}\right)\otimes |+_L\>^{\otimes 2k},
\eeq
whose binary representations at the logical level are:
\beq
\Omega_L^{H_1}=\left(
  \begin{array}{cccc|cccc}
    \Lambda_k^m & {\bf 0} & {\bf 0} & {\bf 0} & I_k+\Lambda_k^m & {\bf 0} & {\bf 0} & \Lambda_k^m\\[2pt]
    {\bf 0} & {\bf 0}_k & {\bf 0} & {\bf 0} & {\bf 0} & I_k & {\bf 0} & {\bf 0} \\[2pt]
    {\bf 0} & {\bf 0} & I_k & {\bf 0} & {\bf 0} & {\bf 0} & {\bf 0}_k & {\bf 0} \\[2pt]
    {\bf 0} & {\bf 0} & {\bf 0} & I_k & \Lambda_k^m & {\bf 0} & {\bf 0} & {\bf 0} \\
  \end{array}
\right)
\eeq
and
\beq
\Omega^{H_2}_L=\left(
\begin{array}{ccc|ccc}
{\bf 0}_k & {\bf 0 } & {\bf 0} & I_k& {\bf 0} & {\bf 0} \\[2pt]
{\bf 0} & \Lambda_k^m & {\bf 0} & {\bf 0} & I_{k}+\Lambda_k^m& {\bf 0}\\[2pt]
{\bf 0}& {\bf 0}&I_{2k}& {\bf 0}& {\bf 0} & {\bf 0}_{2k}
\end{array}
\right),
\eeq
respectively. Note that $\left|\Omega^{H_2}_L\right\>$ is a CSS state. $\left|\Omega_L^{H_1}\right\>$ is the joint $+1$ eigenstate of $\{X_{j,L}\otimes Z_{j+k,L}, Z_{j,L}\otimes X_{j+k,L}\ |\ j\in \mathscr{M}\}$, and thus, it is a CSS state up to Hadamard gates.


\subsubsection{-C- stage}
We first introduce the generalized stabilizer formalism that is helpful later. Consider a $2^k$ dimensional subspace $C(\mathcal{G})$ of the $N$ logical qubit Hilbert space, where $\cG$ has $N-k$ stabilizer generators. We focus on the effects of Clifford circuits on the $k$ logical qubits stabilized by $\mathcal{G}$. Consider a set of matrices $\textsf{C}_{\mathcal{G}}$ of the form:
\beq\label{eq:generalized_circuit matrix}
\left(
  \begin{array}{c|c}
    Q' & R' \\
    \hline
    \rule[0.4ex]{0pt}{8pt}
    S' & T'  \\
    \hline
    \rule[0.4ex]{0pt}{8pt}
    A & B
  \end{array}
\right).
\eeq
Here, $(A|B)$ corresponds to the stabilizer generators of $\mathcal{G}$;   $(Q'|R')$ and $(S'|T')$ are $k\times 2N$ binary matrices orthogonal to $(A|B)$ with respect to the symplectic inner product, and which are symplectic partners of each other. They can  be regarded as ``encoded operators" on $C(\mathcal{G})$.
We define the following equivalence relation $\mathscr{R}$ in  $\textsf{C}_\mathcal{G}$: two matrices
\beqs
{C}_1=\left(
  \begin{array}{c|c}
    Q'_1& R'_1 \\[2pt]
    \hline
    \rule[0.4ex]{0pt}{8pt}
    S'_1 & T'_1  \\[2pt]
    \hline
    \rule[0.4ex]{0pt}{8pt}
    A_1 & B_1
  \end{array}
\right) \quad \text{and} \quad {C}_2=\left(
  \begin{array}{c|c}
    Q'_2& R'_2 \\[2pt]
    \hline
    \rule[0.4ex]{0pt}{8pt}
    S'_2 & T'_2  \\[2pt]
    \hline
    \rule[0.4ex]{0pt}{8pt}
    A_2 & B_2
  \end{array}
\right),
\eeqs
are equivalent if (a)~$(A_1|B_1)$ and $(A_2|B_2)$ generate the same stabilizer group $\mathcal{G}$; and (b)~{\tiny
$\left(
  \begin{array}{c|c}
    Q'_1 & R'_1 \\[2pt]
    \hline
    \rule[0.4ex]{0pt}{5pt}
    S'_1 & T'_1  \\
  \end{array}
\right)$} differs from {\tiny
$\left(
  \begin{array}{c|c}
    Q'_2 & R'_2 \\[2pt]
    \hline
    \rule[0.4ex]{0pt}{5pt}
    S'_2 & T'_2  \\
  \end{array}
\right)$} by multiplication of elements in $\mathcal{G}$. Thus, there is a one-to-one correspondence between $\textsf{C}_{\mathcal{G}}/\mathscr{R}$ and ${\rm Sp}(2k, \mathbb{Z}_2)$.
Therefore, $\textsf{C}_{\mathcal{G}}/\mathscr{R}$ captures the behavior of stabilizer circuits on $C(\mathcal{G})$. The circuit representation of Eq.~(\ref{eq:generalized_circuit matrix}) is called the \emph{generalized stabilizer form} (GSF) of $\mathcal{G}$.

The following lemma will also be used in the circuit construction:
\begin{mylemma}\label{lemma:stabilizer_transform}
Let $L_1$ be an $n\times n$ lower triangular matrix with the diagonal elements being zeros. 
Suppose
\beqs
L=(I_n  \ L_1).
\eeqs
Then there exists  a full-rank matrix $L'=(L_2  \ L_3)$, where $L_2$ and $L_3$ are two $n\times n$ lower triangular matrices,
such that the rows of $L'$ are linear combinations of rows of $L$ and
\beq\label{eq:orthogonal}
 L'
 \left(
   \begin{array}{c}
     I_n \\
     I_n \\
   \end{array}
 \right) = L_2+L_3= I_n.
 \eeq
\end{mylemma}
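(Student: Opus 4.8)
The plan is to build $L'$ row by row, using the lower-triangular structure of $L_1$ to make the construction work greedily from the top row down. Write $L_1 = (\ell_{ij})$ with $\ell_{ij}=0$ whenever $j\ge i$ (strictly lower triangular). The rows of $L=(I_n\ L_1)$ are $r_i = (e_i\mid \ell_i)$, where $e_i$ is the $i$th standard basis vector and $\ell_i$ is the $i$th row of $L_1$, which is supported on coordinates $\{1,\dots,i-1\}$. The target condition \eqref{eq:orthogonal} says that row $i$ of $L'$, call it $r'_i = (u_i\mid v_i)$, must satisfy $u_i+v_i = e_i$; moreover $u_i$ and $v_i$ must each be the $i$th row of a lower-triangular matrix, i.e.\ supported on $\{1,\dots,i\}$; and $r'_i$ must be a linear combination of the $r_j$'s, with $L'$ full rank.

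The key observation is that any linear combination $\sum_{j} c_j r_j$ has left half $\sum_j c_j e_j = (c_1,\dots,c_n)$, so the left half of a combination is just the coefficient vector. Hence to keep $u_i$ supported on $\{1,\dots,i\}$ we may only use rows $r_1,\dots,r_i$, and we should take $r'_i = \sum_{j\le i} c_j r_j$. The coefficient $c_i$ must be $1$ (else $L'$ is singular — its left-half restricted to the first $i$ coordinates would be singular), so $u_i = e_i + \sum_{j<i} c_j e_j$ and $v_i = \sum_{j\le i} c_j \ell_j$. The constraint $u_i + v_i = e_i$ becomes $\sum_{j<i} c_j e_j = \sum_{j\le i} c_j \ell_j$, equivalently (using $c_i=1$ and that $\ell_i$ is supported below $i$) $\sum_{j<i} c_j(e_j + \ell_j) = \ell_i$. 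I would solve this for $c_1,\dots,c_{i-1}$ by downward substitution: the matrix whose rows are $e_j+\ell_j$ for $j=1,\dots,i-1$ is lower triangular with $1$'s on the diagonal, hence invertible over $\mathbb{Z}_2$, so a unique solution exists. This simultaneously shows $L' = (L_2\ L_3)$ exists with $L_2$ lower triangular and unit diagonal (so full rank, hence $L'$ full rank), $L_3 = L_2 L_1$ lower triangular (product of a lower-triangular and a strictly-lower-triangular matrix), and $L_2 + L_3 = I_n$ by construction.

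Actually the cleanest packaging avoids the row-by-row recursion entirely: set $L_2 := (I_n + L_1)^{-1}$, which exists because $I_n+L_1$ is lower triangular with unit diagonal, and is itself lower triangular with unit diagonal; then put $L_3 := L_2 L_1$, which is lower triangular (lower-triangular times strictly-lower-triangular), and check $L_2 + L_3 = L_2(I_n + L_1) = I_n$, giving \eqref{eq:orthogonal}. Finally $L' = (L_2\ L_3) = L_2\,(I_n\ L_1) = L_2 L$, so its rows are $\mathbb{Z}_2$-linear combinations of the rows of $L$, and $L'$ is full rank since $L_2$ is invertible. I expect the only real subtlety to be the bookkeeping that guarantees $L_3$ is lower triangular and $L_2$ has unit diagonal (both follow from $L_1$ being \emph{strictly} lower triangular, which is exactly the hypothesis on the diagonal of $L_1$); everything else is a one-line verification.
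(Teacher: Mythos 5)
Your proof is correct, and your closed-form packaging is a genuinely cleaner route than the paper's. The paper proceeds exactly as in your first paragraph: it builds $L'$ row by row, setting $l'_1=l_1$ and $l'_j=l_j+\sum_{p\in\mathscr{I}_j}l'_p$ where $\mathscr{I}_j=\{p<j : l_jc_p=1\}$, and then verifies $l'_jc_p=\delta_{jp}$ by induction on $j$; this is a Gaussian-elimination-style recursion whose coefficient matrix is lower triangular with unit diagonal. Your second formulation identifies in closed form what that recursion is computing: the coefficient matrix is precisely $L_2=(I_n+L_1)^{-1}$, so that $L'=L_2L=(L_2\ \ L_2L_1)$ and Eq.~(\ref{eq:orthogonal}) reads $L_2(I_n+L_1)=I_n$, which holds by definition. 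All the structural claims (existence and lower-triangularity of $(I_n+L_1)^{-1}$ with unit diagonal, strict lower-triangularity of $L_2L_1$, full rank of $L'$) follow from $L_1$ being strictly lower triangular, exactly as you say. What the closed form buys is the elimination of the induction entirely and a transparent reason why the construction works; what the paper's recursive version buys is an explicit row-operation procedure that matches how the measurement sequence is actually assembled in the -C- stage construction (each $l'_j$ is obtained by adding previously produced rows, mirroring the sequence of GSF row operations).
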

\begin{proof}
See Appendix.~\ref{sec:proof_lemma} for details.
\end{proof}

We now construct the sequence of Pauli measurements which can generate any -C- stage on logical qubits $\text{Q}_1,\dots , \text{Q}_k$ of an $\llb n,N=2k,d\rrb$ CSS code. We
start with the GSF of an arbitrary -C- circuit with auxiliary logical qubits $\text{A}_1,\dots,\text{A}_k$ in $|+_L\>^{\otimes k}$:
\beq
\left(
  \begin{array}{cc|cc}
    {\bf 0}_k  & U & {\bf 0}_k  & {\bf 0}_k \\[2pt]
    \hline
    \rule[0.4ex]{0pt}{8pt}
    {\bf 0}_k  & {\bf 0}_k & {\bf 0}_k & (U^t)^{-1} \\[2pt]
    \hline
    \rule[0.4ex]{0pt}{8pt}
    I_k & {\bf 0}_k & {\bf 0}_k & {\bf 0}_k \\
  \end{array}
\right),
\eeq
and reduce it to the idle circuit by a series of row operation. This set of operations in reverse will effectively implement the target CNOT circuit.

As mentioned before, $U$ is an invertible upper triangular matrix. The GSF is then equivalent to
\beq\label{eq:first_measure_before}
\left(
  \begin{array}{cc|cc}
    U+I_k  & U & {\bf 0}_k  & {\bf 0}_k \\[2pt]
    \hline
    \rule[0.4ex]{0pt}{8pt}
    {\bf 0}_k  & {\bf 0}_k & {\bf 0}_k & (U^t)^{-1} \\[2pt]
    \hline
    \rule[0.4ex]{0pt}{8pt}
    I_k & {\bf 0}_k & {\bf 0}_k & {\bf 0}_k \\
  \end{array}
\right)
\eeq
since all the nonzero row vectors of $\left(U+I_k \ \ {\bf 0}_k \ | \ {\bf 0}_k  \ \ {\bf 0}_k\right)$ can be generated by $\left(I_k \ \ {\bf 0}_k \ | \ {\bf 0}_n  \ \ {\bf 0}_n\right)$
and we add these vectors to the first row.

Since $U$ is of full rank, the diagonal elements of $U+I_k$ must be all zeros.
Observe that  $\left({\bf 0}_k \ {\bf 0}_k \ | \ {I}_k \ \ (U^t)^{-1}+I_k\right)$ commutes with the logical operators and is a symplectic partner of the stabilizer generators, since
\beqs
\left( {I}_k \ \ \ \ (U^t)^{-1}+I_k\right) \left(U+I_k \ \ \ \ U\right)^t = {\bf 0}_k,
\eeqs
and
\beqs
\left(I_k \ {\bf 0}_k \ |{\bf 0}_k \ \ {\bf 0}_k\right)\left(\ {I}_k \ \ (U^t)^{-1}+I_k \ | \ {\bf 0}_k \ {\bf 0}_k \right)^t=I_{2k}.
\eeqs
One can measure $k$ commuting logical Pauli operators $\left({\bf 0}_k \ {\bf 0}_k \ | \ {I}_k \ \ (U^t)^{-1}+I_k\right)$ simultaneously. The GSF will then be transformed into
\beq\label{eq:first_measure_after}
\left(
  \begin{array}{cc|cc}
    U+I_k  & U & {\bf 0}_k  & {\bf 0}_k \\[2pt]
    \hline
    \rule[0.4ex]{0pt}{8pt}
    {\bf 0}_k  & {\bf 0}_k & {\bf 0}_k & (U^t)^{-1} \\[2pt]
    \hline
    \rule[0.4ex]{0pt}{8pt}
    {\bf 0}_k & {\bf 0}_k & {I}_k & (U^t)^{-1}+I_k \\[2pt]
  \end{array}
\right).
\eeq
(Meanwhile, we can perform the Pauli measurements $\left(I_k \ \ {\bf 0}_k \ | \ {\bf 0}_k \ \ {\bf 0}_k\right)$ to reverse the process (from Eq.~(\ref{eq:first_measure_after}) to Eq.~(\ref{eq:first_measure_before})).)

Now, adding the third row of Eq.~(\ref{eq:first_measure_after}) to the second row, one can obtain an equivalent GSF
\beq
\left(
  \begin{array}{cc|cc}
    U+I_k  & U & {\bf 0}_k  & {\bf 0}_k \\[2pt]
    \hline
    \rule[0.4ex]{0pt}{8pt}
    {\bf 0}_k  & {\bf 0}_k & I_k & I_k \\[2pt]
    \hline
    \rule[0.4ex]{0pt}{8pt}
    {\bf 0}_k & {\bf 0}_k & {I}_k & (U^t)^{-1}+I_k \\
  \end{array}
\right).
\eeq
Let $L=\left( {I}_k \ \ \ L_1\right)$, where $L_1=(U^t)^{-1}+I_n$ is a lower triangular matrix with all the diagonal elements being~0. By Lemma~\ref{lemma:stabilizer_transform}, the GSF can be equivalently transformed into
\beq\label{eq:second_measure_before}
\left(
  \begin{array}{cc|cc}
    U+I_k  & U & {\bf 0}_k  & {\bf 0}_k \\[2pt]
    \hline
    \rule[0.4ex]{0pt}{8pt}
    {\bf 0}_k  & {\bf 0}_k & I_k & I_k \\[2pt]
    \hline
    \rule[0.4ex]{0pt}{8pt}
    {\bf 0}_k & {\bf 0}_k & L_2 & L_3 \\
  \end{array}
\right),
\eeq
where $(L_2 \ \ L_3) (I_k \ \ I_k)^t=I_k$. One can measure a set of $k$ Pauli operators $(I_k \ \ I_k | \ {\bf 0}_k \ \  {\bf 0}_k)$ simultaneously and transform the GSF into
\beq\label{eq:second_measure_after}
\left(
  \begin{array}{cc|cc}
    U+I_k  & U & {\bf 0}_k  & {\bf 0}_k \\[2pt]
    \hline
    \rule[0.4ex]{0pt}{8pt}
    {\bf 0}_k  & {\bf 0}_k & I_k & I_k \\[2pt]
    \hline
    \rule[0.4ex]{0pt}{8pt}
    I_k & I_k & {\bf 0}_k & {\bf 0}_k \\
  \end{array}
\right).
\eeq
Meanwhile, measuring $\left({\bf 0}_k \ \ {\bf 0}_k \ | \ L_2 \ \ L_3\right)$  will transfer the GSF of Eq.~(\ref{eq:second_measure_after}) into Eq.~(\ref{eq:second_measure_before}). Note that the measurement of $\left({\bf 0}_k \ \ {\bf 0}_k \ | \ L_2 \ \ L_3\right)$ is equivalent to measuring $\left({\bf 0}_k \ \ {\bf 0}_k \ | \ {I}_k \ \ (U^t)^{-1}+I_k \right)$.

Now, since the stabilizer generators in Eq.~(\ref{eq:second_measure_after})  are of the form $\left(I_k \ \ I_k \ | \ {\bf 0}_k \ \ {\bf 0}_k\right)$, one can add $\left(U+I_k \ \ U+I_k \ | \ {\bf 0}_k  \ \ {\bf 0}_k\right)$ to the first row of Eq.~(\ref{eq:second_measure_after}), which equivalently reduces the GSF to:
\beq\label{eq:third_measure_before}
\left(
  \begin{array}{cc|cc}
    {\bf 0}_k  & I_k & {\bf 0}_k  & {\bf 0}_k \\[2pt]
    \hline
    \rule[0.4ex]{0pt}{8pt}
    {\bf 0}_k  & {\bf 0}_k & I_k & I_k \\[2pt]
    \hline
    \rule[0.4ex]{0pt}{8pt}
    I_k & I_k & {\bf 0}_k & {\bf 0}_k \\
  \end{array}
\right).
\eeq

The final step is to eliminate the left-most $I_k$ in the second row of Eq.~(\ref{eq:third_measure_before}). This can be done by measuring  $\left({\bf 0}_k \ \ {\bf 0}_k \ | \ I_k  \ \ {\bf 0}_k\right)$ and adding the third row to the second. This will then transform the GSF into the second matrix in:
\beq\label{eq:third_measure_after}
\left(
  \begin{array}{cc|cc}
    {\bf 0}_k  & I_k & {\bf 0}_k  & {\bf 0}_k \\[2pt]
    \hline
    \rule[0.4ex]{0pt}{8pt}
    {\bf 0}_k  & {\bf 0}_k & {\bf 0}_k & I_k \\[2pt]
    \hline
    \rule[0.4ex]{0pt}{8pt}
    {\bf 0}_k & {\bf 0}_k & I_k & {\bf 0}_k \\
  \end{array}
\right),
\eeq
Meanwhile, one can measure the set of $k$ logical Pauli operators $\left(I_k \ \ I_k \ | \ {\bf 0}_k \ \ {\bf 0}_k\right)$ to transform Eq.~(\ref{eq:third_measure_after}) to Eq.~(\ref{eq:third_measure_before}).

To reverse the whole procedure above and start from Eq.~(\ref{eq:third_measure_after}), we initially set $\text{A}_1,\dots, \text{A}_k$ to $|0_L\>^{\otimes k}$ and perform the following three sets of Pauli measurements:
\beq\label{eq:three_measurements}
\begin{split}
&1.~\left(I_k \ \ I_k \ | \ {\bf 0}_k \ \ {\bf 0}_k\right),\\
&2.~\left({\bf 0}_k \ \ {\bf 0}_k \ | \ {I}_k \ \ (U^t)^{-1}+I_k \right),\\
&3.~\left(I_k \ \ {\bf 0}_k \ | \ {\bf 0}_k \ \ {\bf 0}_k\right).
\end{split}
\eeq
The measurements require three $4k$ logical qubits CSS ancilla states, which are
\beq\label{eq:Cnot_ancilla_1}
\left|\Omega_L^{C_1}\right\>=\frac{1}{\sqrt{2^k}}\left(\prod_{j=1}^k \left(I+X_{j,L} X_{j+k,L}\right)|0_L\>^{\otimes 2k}\right)\otimes |+_L\>^{\otimes 2k},
\eeq
\beq\label{eq:Cnot_ancilla_2}
\left|\Omega_L^{C_2}\right\>=\frac{1}{\sqrt{2^k}} |0_L\>^{2k}\otimes\left(\prod_{j=1}^k \left(I+Z^{{\bf u}_j}_L\right)|+_L\>^{\otimes 2k})\right),
\eeq
and
\beq
\left|\Omega_L^{C_3}\right\> = \left(|+_L\>^{\otimes k}|0_L\>^{\otimes k}\right)\otimes |+_L\>^{\otimes 2k}.
\eeq
Here, ${\bf u}_j$ is the $j$th row of   $\left({I}_k \quad (U^t)^{-1}+I_k \right)$. The binary representations of these states are:
\beq
\Omega^{C_1}_L= \left(
            \begin{array}{ccc|ccc}
              I_k & I_k & {\bf 0} & {\bf 0} & {\bf 0} & {\bf 0}\\[2pt]
              {\bf 0} & {\bf 0} & {\bf 0} & I_k & I_k& {\bf 0} \\[2pt]
              {\bf 0}& {\bf 0}& I_{2k} & {\bf 0} & {\bf 0}& {\bf 0}_{2k}\\
            \end{array}
          \right),
\eeq

\beq
\Omega^{C_2}_L= \left(
            \begin{array}{ccc|ccc}
              {\bf 0}_{2k} & {\bf 0} & {\bf 0} & {\bf 0} & I_k & (U^t)^{-1}+I_k \\
              {\bf 0}& \left[\left(U^t\right)^{-1}\right]^{t}+I_k & I_k & {\bf 0} & {\bf 0} & {\bf 0} \\
              {\bf 0}& {\bf 0} &{\bf 0}& I_{2k}& {\bf 0}& {\bf 0}
            \end{array}
          \right),
\eeq
and
\beq
\Omega^{C_3}_L=\left(
\begin{array}{ccc|ccc}
I_k & {\bf 0} & {\bf 0 } & {\bf 0}_k& {\bf 0} & {\bf 0} \\[2pt]
{\bf 0} & {\bf 0}_k & {\bf 0} & {\bf 0}  & I_k  & {\bf 0} \\[2pt]
{\bf 0}& {\bf 0} & I_{2k} & {\bf 0} & {\bf 0 }& {\bf 0}_{2k}\\
\end{array}
\right).
\eeq
$\left|\Omega_L^{C_1}\right\>$ is actually a $k$-fold tensor product of Bell states. $\left|\Omega_L^{C_2}\right\>$ is the key resource state in our procedure to reduce the depth of -C- stage computation. All the ancillas here are CSS states.
The net effect is the desired \text{-C-} stage acting on $\text{Q}_{1}, \dots, \text{Q}_k$, and the auxiliary qubits $\text{A}_{1}, \dots, \text{A}_k$ are reset to $|+_L\>^{\otimes k}$ (up to logical $Z$ corrections). One can transform $\text{A}_{1}, \dots, \text{A}_k$ back into $|0_L\>^{\otimes k}$ or just keep them and start with $|+_L\>^{\otimes k}$ for the next stage. The procedure with $\text{A}_{1}, \dots, \text{A}_k$ initially in the $|+_L\>^{\otimes k}$ state for -C- stage is similar. As a conclusion, one has the following theorem:
\begin{mytheorem}
For an $\llb n, 2k, d \rrb$ CSS code, any logical Clifford circuit on $k$ logical qubits can be realized fault-tolerantly by 22 rounds of single-shot Steane syndrome measurement.
\end{mytheorem}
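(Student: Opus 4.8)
The plan is to assemble the single-stage Steane-measurement gadgets of Sec.~\ref{sec:constant_pauli_measurement} along the Bruhat factorization of Theorem~\ref{thm:bruhat} and count rounds. First I would apply Theorem~\ref{thm:bruhat} to the $2k\times 2k$ symplectic matrix $M$ of the target logical Clifford circuit, writing it in the nine-stage form of Eq.~(\ref{eq:used_bruhat}): four -C- stages ($M^{(1)}_{C}$, $M^{(2)}_{C}$, $\pi M^{(3)}_{C}\pi^{-1}$, $\pi M^{(4)}_{C}\pi^{-1}$), four -P- stages, one -H- stage, and a terminal permutation $\pi$. Working in the $\llb n, 2k, d\rrb$ CSS code with the fixed logical-qubit ordering $\{\text{A}_1,\dots,\text{A}_k,\text{Q}_1,\dots,\text{Q}_k\}$, I would note that $\pi$ and any relabeling of the form $\pi(\cdot)\pi^{-1}$ is tracked in software and costs no measurement rounds, so each conjugated block $\pi M^{(j)}_{C}\pi^{-1}$ still counts as one -C- stage (with upper-triangular $U$ after relabeling).

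Next I would substitute the per-stage gadget counts established earlier. A -P- stage on a data-qubit subset $\mathscr{M}$ is implemented by the two commuting Pauli-measurement sets $\{X_{\text{A}_j,L}Y_{\text{Q}_j,L}: j\in\mathscr{M}\}$ and $\{Z_{\text{Q}_j,L}: j\in\mathscr{M}\}$ through the ancilla states $|\Omega^{P_1}_L\rangle$ and $|\Omega^{P_2}_L\rangle$, i.e.\ two Steane rounds; an -H- stage is implemented by $\{X_{\text{A}_j,L}Z_{\text{Q}_j,L}\}$ then $\{X_{\text{Q}_j,L}\}$ via $|\Omega^{H_1}_L\rangle$ and $|\Omega^{H_2}_L\rangle$, again two rounds; and a -C- stage is implemented by the three measurement sets of Eq.~(\ref{eq:three_measurements}) via $|\Omega^{C_1}_L\rangle$, $|\Omega^{C_2}_L\rangle$, $|\Omega^{C_3}_L\rangle$, i.e.\ three rounds. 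The total is $4\cdot 3 + 4\cdot 2 + 1\cdot 2 = 22$. Fault tolerance of each individual round is Theorem~\ref{thm:single-shot}: each Steane syndrome-measurement circuit uses only two transversal CNOTs and bitwise measurements, so by Lemma~\ref{lemma:effective_error} the accumulated noise is equivalent to spatially uncorrelated effective errors on the data block immediately before and after the round, and one shot suffices provided all the ancilla states $|\Omega^{\bullet}_L\rangle$ are qualified.

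What remains is to certify that the -C- gadget really does what is claimed. Here I would check: (i) each of the three operator sets in Eq.~(\ref{eq:three_measurements}) is a mutually commuting family of legal logical operators --- orthogonal to the current stabilizer of the running GSF and forming symplectic partners with the appropriate encoded-operator rows --- so each is a valid simultaneous Steane measurement; (ii) the intermediate GSF moves (adding stabilizer-generated rows to encoded-operator rows) are exactly the equivalence relation $\mathscr{R}$ defined before Lemma~\ref{lemma:stabilizer_transform}, hence physically free; and (iii) the passage from Eq.~(\ref{eq:second_measure_before}) to Eq.~(\ref{eq:second_measure_after}) uses Lemma~\ref{lemma:stabilizer_transform} with $L_1=(U^t)^{-1}+I_k$, which is strictly lower triangular since $U$ (hence $U^t$ and $(U^t)^{-1}$) has unit diagonal, to replace the measurement of $(I_k\ I_k\,|\,\mathbf{0}_k\ \mathbf{0}_k)$ by the physically realizable measurement of $(\mathbf{0}_k\ \mathbf{0}_k\,|\,L_2\ L_3)$, which is moreover equivalent to measuring $(\mathbf{0}_k\ \mathbf{0}_k\,|\,I_k\ (U^t)^{-1}+I_k)$. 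The Pauli byproducts generated at every round are conjugated forward through the remaining Clifford stages and the software swaps $\text{A}_j\leftrightarrow\text{Q}_j$ are recorded, so they collapse into one final layer of $Z$ and $X$ gates. I expect items (i) and (iii) for the -C- stage --- checking the commutation structure and that the Lemma~\ref{lemma:stabilizer_transform} substitution preserves the $\mathscr{R}$-class and the measurement statistics --- to be the only substantive content; the -P- and -H- stages reduce immediately to the single-qubit gadgets already worked out together with the trivial qubitwise commutation of the batched sets, and the rest (confirming the stated binary forms $\Omega^{C_i}_L$ are stabilizer states and that the stage counts add because the stages run sequentially under independent qualified-ancilla hypotheses) is bookkeeping.
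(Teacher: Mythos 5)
Your proposal is correct and follows essentially the same route as the paper: Bruhat decomposition into the nine-stage -C-P-C-P-H-P-C-P-C- form with a software-tracked permutation, two Steane rounds per -P- and -H- stage, three rounds per -C- stage via the measurements of Eq.~(\ref{eq:three_measurements}) justified by the GSF reduction and Lemma~\ref{lemma:stabilizer_transform}, and single-shot fault tolerance from Theorem~\ref{thm:single-shot} with qualified ancillas, giving $4\cdot 3+4\cdot 2+2=22$. The substantive checks you flag for the -C- gadget (commutation, symplectic-partner structure, and the strict lower-triangularity of $(U^t)^{-1}+I_k$) are exactly the points the paper verifies.
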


%
%

\subsection{FT preparation of qualified ancilla states }
The ancilla states listed in the last subsection can be prepared fault-tolerantly by using Shor syndrome measurement to measure all stabilizer generators and logical Pauli operators, which will be discussed in Sec.~\ref{sec:related_FTQC}.
In this subsection, we generalize the FT state preparation protocol in \cite{Ancilla_distillation_1, zheng2017efficient} to show that all the logical stabilizer ancilla states required in the previous subsection can be prepared fault-tolerantly by distillation with almost constant overhead in terms of the number of qubits.

Since all logical ancilla states we considered are stabilizer states, once the eigenvalues of all their stabilizers are known, one can remove the errors completely. The basic idea of distillation is shown in Fig.~\ref{fig:ft_distillation}---many copies of the logical ancilla states are prepared in the physical level via the same $U_{\text{prep}}$ and then encoded by $U_{\text{enc}}$ to the same large block code. Both $U_{\text{prep}}$ and $U_{\text{enc}}$ are noisy in practice. They are sent into a distillation circuit (which is also noisy) and certain blocks are measured bitwise. The eigenvalues of all the stabilizers in group $\mathcal{S}$ and the logical Pauli operators of the output blocks can be estimated if the distillation circuit is constructed based on the parity-check matrix of some classical error-correcting code---the flipped eigenvalues caused by errors during state preparation and distillation can be treated as classical noise and thus be decoded. Since correlated errors remaining on a block can cause its estimated eigenvalues not compatible with each other, each output block needs further check for compatibility. Postselection (rejecting the blocks whose estimated eigenvalues are incompatible) is then done to remove the blocks likely containing correlated errors. Error correction is then applied based on the estimated eigenvalues of stabilizers to the remaining blocks.
\begin{figure}[!htp]
\centering\includegraphics[width=90mm]{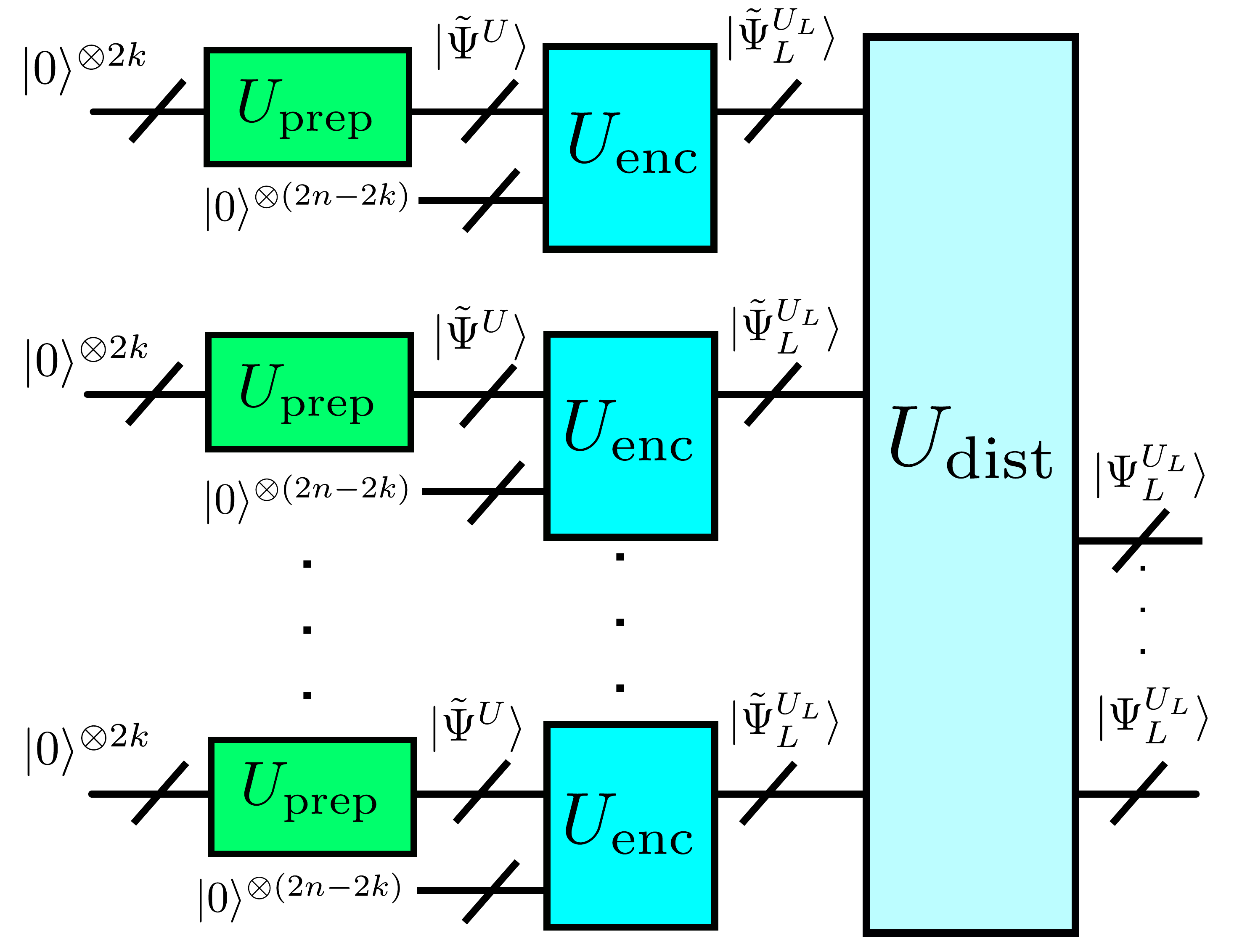}
\caption{\label{fig:ft_distillation}FT ancilla state distillation circuit for $\left|\Psi_L^{U_L}\right\>$.  Ancilla states are prepared and encoded via noisy quantum circuits $U_{\text{prep}}$ and $U_{\text{enc}}$. Then they are fed to the distillation circuit and output the qualified ancilla state. The preparation of $|\Omega^{\bf EF}_{L}\>$ is similar.
}
\end{figure}

The distillation circuit can be synthesized according the parity-check matrix of an $[n_c,k_c,d_c]$ classical code, which has the form $\textsf{H}=(I_{n_c-k_c}| \ \textsf{A}_c)$. Consider a group of $n_c$ ancilla blocks. Choose the first $r_c=n_c-k_c$ ancillas blocks to hold the classical parity checks, and do transversal CNOTs from the remaining $k_c$ ancillas onto each of the parity-check ancillas according to the pattern of 1s in the rows of $\textsf{A}_c$: if $[\textsf{A}_c]_{i,j}=1$, we apply a transversal CNOT from the $(r_c+j)$th ancilla to the $i$th ancilla block. Then measure all the qubits on each of the first $r_c$ ancilla blocks in the $Z$ basis, which destroys the states of those blocks and extracts information to estimate the eigenvalues of all $Z$ types stabilizers and logical operators of the remaining $k_c$ blocks. In the low error regime, after filtering out the blocks with incompatible estimated eigenvalues of stabilizers, the output blocks will contain no correlated $X$ errors after subsequent error correction if $d_c$ is larger than the distance of the underlying CSS code~\cite{zheng2017efficient}. Correlated $Z$ errors can be removed in a similar manner.

\begin{figure}[!htp]
\centering\includegraphics[width=65mm]{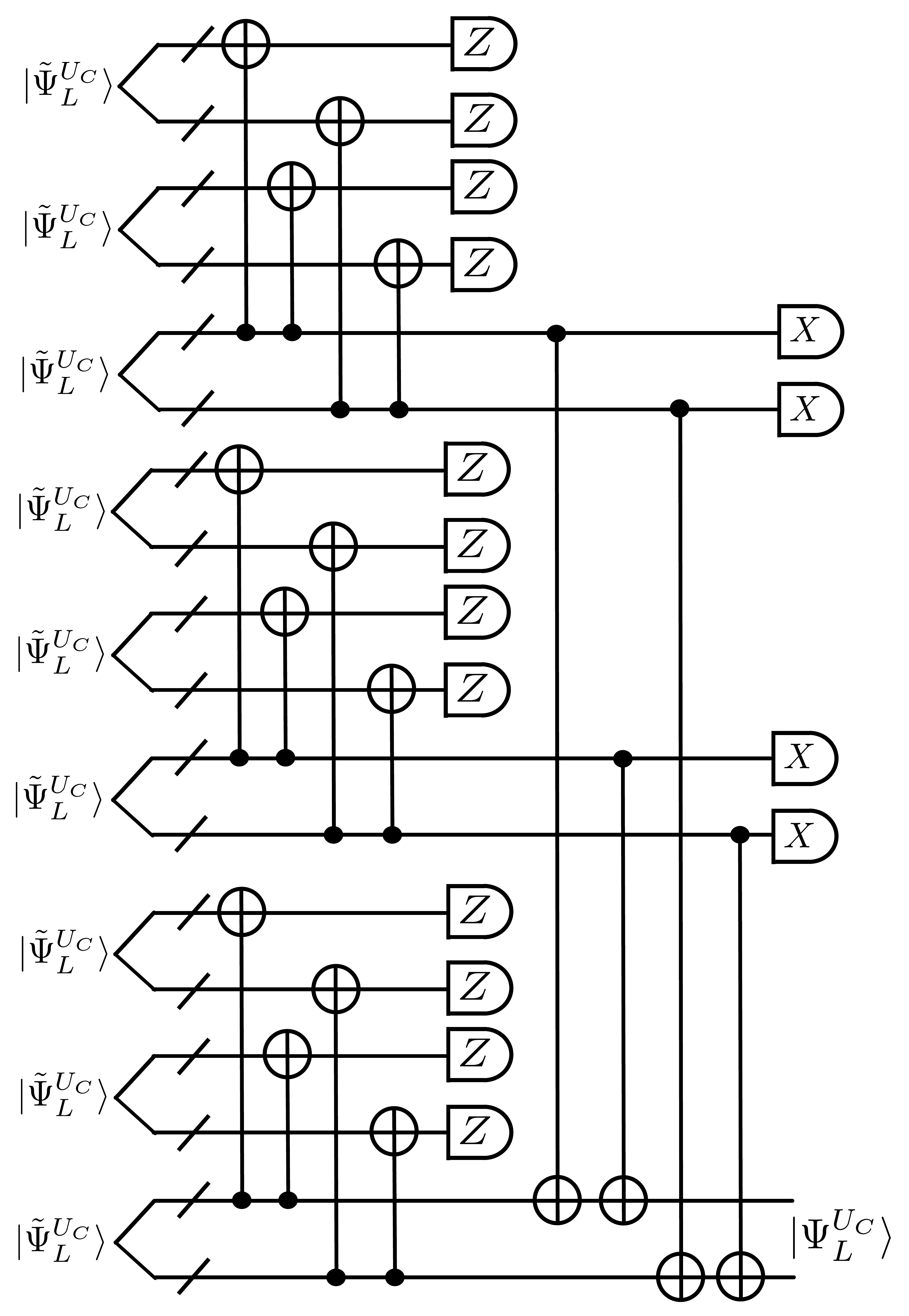}
\caption{\label{fig:ft_distillation_CSS} Two-stage FT ancilla state distillation circuit for $|\Psi_L^{U_C}\>$. The distillation circuit is based on the parity-check matrix of the $[3,1,3]$ code for both stages. Actually this circuit can distill any logical CSS states fault-tolerantly.
}
\end{figure}

One can concatenate two stages of distillation (with the output blocks of the first stage randomly shuffled) to remove both correlated $X$ and $Z$ errors. Fig.~\ref{fig:ft_distillation_CSS} shows the overall circuit to distill $\left|\Psi_L^{U_C}\right\>$ using the $[3,1,3]$ code for both stages. For a two-stage distillation protocol based on two classical $[n_{c_1},k_{c_1},d_{c_1}]$ and $[n_{c_2},k_{c_2},d_{c_2}]$ codes, the number of input and output blocks are $n_{c_1}n_{c_2}$ and $Y(p)\cdot n_{c_1}n_{c_2}$ respectively. Here, $Y(p)$ is the yield rate defined as
\beq
Y(p)=
\frac{k_{c_1}k_{c_2}(1-R_1(p))(1-R_2(p))}{n_{c_1}n_{c_2}},
\eeq
where $R_i(p)$ is the block rejection rate for postselection in the $i$th stage of distillation for a gate/measurement error rate $p$. Asymptotically, the rejection rate for each round of distillation is $O(p^2)$, because at least two failures are needed to cause a rejection of the output blocks. Thus, it is likely that $R_1(p)$ and $R_2(p)$ negligible in the small $p$ regime. On the other hand, good capacity-achieving classical codes exist such that $\frac{k_{c_1}k_{c_2}}{n_{c_1}n_{c_2}}$ can be independent of the code distance of the underlying CSS code to ensure $d_c>d$, so that the distillation circuits are still able to output qualified ancilla states. Hence, $Y(p)$ can achieve almost $\Theta(1)$ for sufficiently low $p$.

As shown in the last subsection, one needs several ancilla states, which can be grouped into three types:

---\emph{Type I}--- These are CSS states including $\left|\Psi_L^{U_C}\right\>$, $\left|\Omega_L^{P_2}\right\>$ $\left|\Omega_L^{H_2}\right\>$, $\left|\Omega_L^{C_1}\right\>$, $\left|\Omega_L^{C_2}\right\>$, and $\left|\Omega_L^{C_3}\right\>$. They are stabilized by logical operators which are tensor products of either $X$ or $Z$.  They can be prepared and distilled by the circuit in Fig.~\ref{fig:ft_distillation_CSS}: eigenvalues of the $Z$ ($X$) stabilizer generators and $Z$ ($X$) logical operators are checked at the first (second) round to remove correlated $X$ ($Z$) errors.

---\emph{Type II}--- These are CSS states up to logical Hadamard gates applied to some logical qubits, including $\left|\Psi_L^{U_H}\right\>$ and $\left|\Omega_L^{H_1}\right\rangle$. In this case, we distill the upper block to remove correlated $X$ errors and the lower block to remove correlated $Z$ errors in the first round and reverse the order in the second round, as shown in Fig.~\ref{fig:ft_distillation_H}.

\begin{figure}[!htp]
\centering\includegraphics[width=65mm]{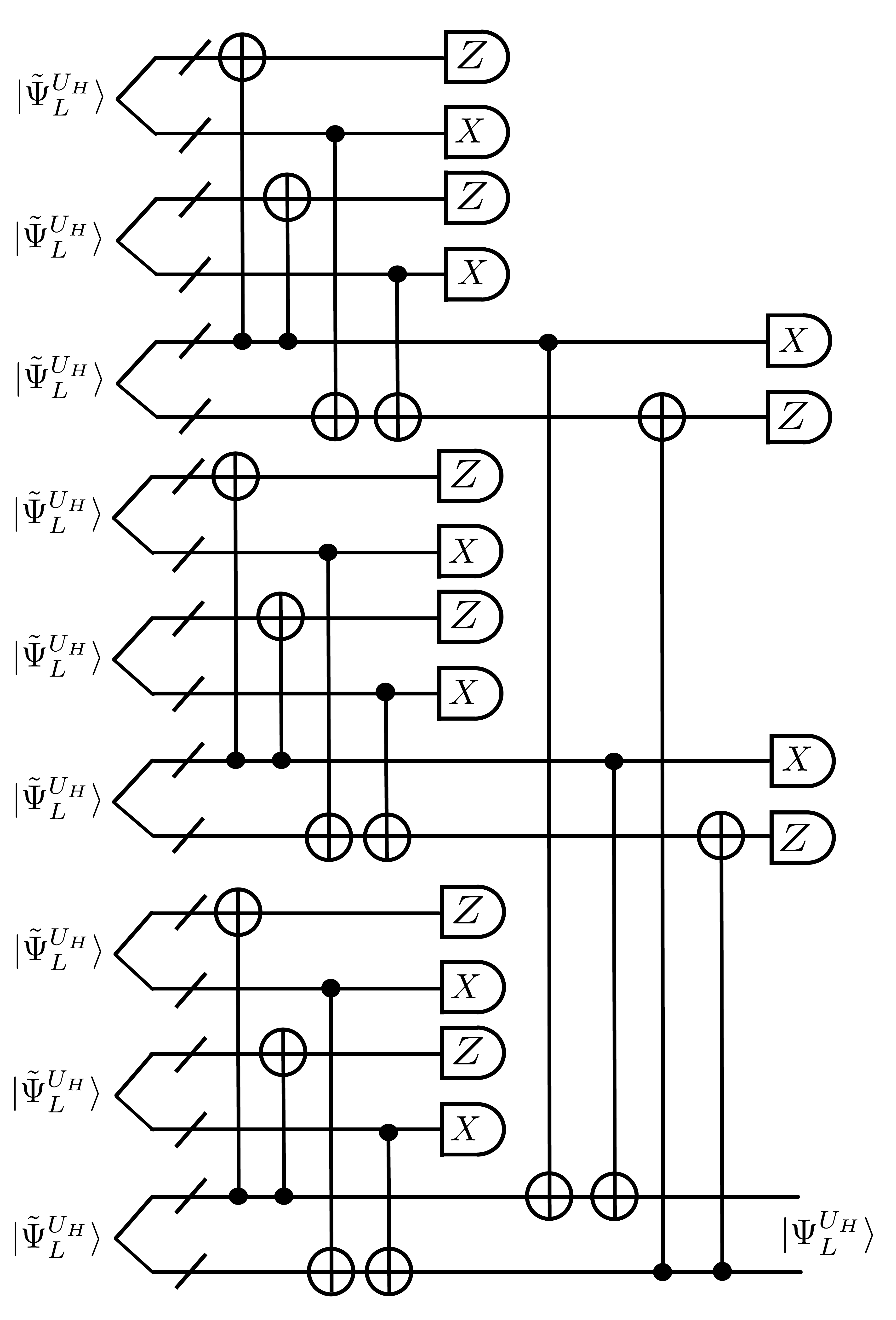}
\caption{\label{fig:ft_distillation_H} Two-stage FT ancilla state distillation circuit for $\left|\Psi_L^{U_H}\right\>$ based on the parity-check matrix of $[3,1,3]$ code. This circuit can be used to prepare any logical CSS states up to logical Hadamard gates fault-tolerantly.
}
\end{figure}

---\emph{Type III}--- This set of states are Type I or II states up to logical Phase gates applied to some logical qubits, including $\left|\Psi_L^{U_P}\right\rangle$ and $\left|\Omega_L^{P_2}\right\rangle$. We will confine our attention to doubly even and self-dual CSS codes and set the weight of logical $X$ operators $X_{j,L}$ to odd numbers for all $j$.

For $\left|\Psi_L^{U_P}\right\rangle$, one could first prepare a qualified CSS state of the form
\beq
\Psi_L^{U_P'}=\left(
  \begin{array}{cc|cc}
    I_k & \Lambda_k^m & {\bf 0} & {\bf 0} \\[2pt]
    {\bf 0} & {\bf 0} &  \Lambda_k^m & I_k\\
  \end{array}
\right)
\eeq
via the distillation circuit in Fig.~\ref{fig:ft_distillation_CSS}.
Then, Phase gates are applied bitwise to the lower ancilla block, which will transform the state to
\beq
\Psi_L^{U_P''}=\left(
  \begin{array}{cc|cc}
    I_k & \Lambda_k^m & {\bf 0} & \Lambda_k^m \\[2pt]
    {\bf 0} & {\bf 0} &  \Lambda_k^m & I_k\\
  \end{array}
\right).
\eeq
This is because the bitwise Phase gates will preserve the stabilizer group while implementing logical Phase gates on all the logical qubits. Then one can apply logical CNOTs (assisted by some CSS states) from the upper block to the lower block on the remaining $k-m$ logical qubits to obtain a qualified $\left|\Psi_L^{U_P}\right\>$.

Similarly, for $\left|\Omega_L^{P_1}\right\>$, one can  prepare a qualified Type II state of the form
\beq
\Omega^{P'_1}_L=\left(
  \begin{array}{cccc|cccc}
     I_k & {\bf 0} & {\bf 0} & {\bf 0} & {\bf 0} & {\bf 0} & {\bf 0} & {\bf 0} \\
     {\bf 0} &\Lambda_k^m &{\bf 0} &{\bf 0} &{\bf 0} & I_k+\Lambda_k^m &{\bf 0} & {\Lambda}_k^m \\
    {\bf 0} & {\bf 0} & I_k & {\bf 0} & {\bf 0}& {\bf 0} & {\bf 0}_k & {\bf 0}\\
    {\bf 0}& {\bf 0} & {\bf 0} & I_k & {\bf 0} & \Lambda_k^m & {\bf 0}&  {\bf 0}_k\\
  \end{array}
\right)
\eeq
by the circuit in Fig.~\ref{fig:ft_distillation_H}. After that, bitwise Phase gates are applied to the upper block to transform $\left|\Omega^{P'_1}_L\right\>$ to
\beq
\Omega^{P''_1}_L=\left(
  \begin{array}{cccc|cccc}
     I_k& {\bf 0} & {\bf 0} & {\bf 0} & {\bf 0} & {\bf 0} & {\bf 0} & {\bf 0} \\[2pt]
      {\bf 0}&\Lambda_k^m &{\bf 0} &{\bf 0} & {\bf 0} & I_k &  {\bf 0}& {\Lambda}_k^m \\[2pt]
    {\bf 0} & {\bf 0} & I_k & {\bf 0} & {\bf 0}& {\bf 0} & {\bf 0}_k & {\bf 0}\\[2pt]
    {\bf 0}& {\bf 0} & {\bf 0} & I_k & {\bf 0} & \Lambda_k^m & {\bf 0}&  {\bf 0}_k\\
  \end{array}
\right).
\eeq
We can then measure the operators $({\bf 0} \ \ {\bf 0} \ | \ I_k \ \Lambda_k^m )$ (assisted by some CSS states) on the upper block and obtain
\beq
\Omega^{P_1}_L=\left(
  \begin{array}{cccc|cccc}
    {\bf 0} & {\bf 0} & {\bf 0} & {\bf 0} & I_k &  \Lambda_k^m & {\bf 0} & {\bf 0} \\[2pt]
     \Lambda_k^m &\Lambda_k^m &{\bf 0} &{\bf 0} & {\bf 0}  & I_k & {\bf 0} & {\Lambda}_k^m \\[2pt]
    {\bf 0} & {\bf 0} & I_k & {\bf 0} & {\bf 0}& {\bf 0} & {\bf 0}_k & {\bf 0}\\[2pt]
    {\bf 0}& {\bf 0} & {\bf 0} & I_k & {\bf 0} & \Lambda_k^m & {\bf 0}&  {\bf 0}_k\\
  \end{array}
\right)
\eeq
up to logical Pauli corrections.

\subsection{Resource overhead}
We estimate the average number of qubits and physical gates to implement a logical Clifford circuit in this subsection.

For both Knill and Steane syndrome measurements, one only needs a constant  rounds of circuit teleportations or Pauli measurements. Asymptotically, the fault-tolerant ancilla state distillation protocol dominates the resource cost. We can recycle the ancilla qubits after they are used to further reduce the redundancy. However, it will not change the asymptotic scaling of resource cost.

As discussed in the previous section, the overall number of qubits required for a two-stage state distillation protocol based on the parity-check matrices of two classical codes is
$$N_q = c n  n_{c_1} n_{c_2},$$
where $c$ is some constant. The numbers of gates for each $U_{\text{prep}}$ and $U_{\text{enc}}$ in Fig.~\ref{fig:ft_distillation} are $O(k^2/\log k)$ and $O(n^2/\log n)$, respectively. Therefore, the total number of gates for noisy logical stabilizer state preparation at the physical level is
$$
N_{\text{enc}}= \frac{c n^2 n_{c_1}n_{c_2} }{\log n}
$$
with depth $O(n)$.

The number of gates for the two-round distillation circuit depends on $\textsf{A}_{c}$, which has $O(k_c^2)$ 1s. Hence, the number of gates required for $U_{\text{dist}}$ for two stages is
$$
N_{\text{dist}} = c_1 k_{c_1}^2n n_{c_2} + c_2 k_{c_2}^2 n,
$$
with depth $O\left(\max\{k_{c_1},k_{c_2}\}\right)$, where $c_1,c_2$ are  constants. Therefore, the overall  depth of ancilla state preparation is
$$
O(\max\{n,k_{c_1}, k_{c_2}\}).
$$

Note that there are $Y(p) n_{c_1}n_{c_2}$ output ancilla blocks per round of distillation and hence the same number of identical circuits can be implemented. On average, one needs
$$
\bar{N}_q=\frac{cn_{c_1}n_{c_2}n}{Y(p)n_{c_1}n_{c_2}}\sim O(n)
$$
qubits for a logical Clifford circuit. On the other hand, the physical gates required for raw state preparation is
$$
\bar{N}_{\text{enc}}=\frac{c n^2 n_{c_1}n_{c_2} }{Y(p)n_{c_1}n_{c_2}\log n}\sim O(n^2/\log n).
$$
If two good classical capacity-achieving codes are used(e.g., low-density parity-check (LDPC) codes~\cite{MacKay:2003:CambridgeUniversityPress}) with $k_{c_i}/n_{c_i}=\Theta(1)$, $i=1,2$, and if we restrict ourselves to $k_{c_1}\lesssim n/\log n$ and $k_{c_2}/n_{c_1}=\Theta(1)$, the average number of physical gates required for distillation will be
$$
\bar{N}_{\text{dist}}=\frac{c_1 k_{c_1}^2 n n_{c_2} + c_2 k_{c_2}^2 n}{Y(p)n_{c_1}n_{c_2}}\sim O(n^2/\log n).
$$
To sum up, an arbitrary logical Clifford circuit requires
$\bar{N}_{\text{gate}} = O(n^2/\log n)$
physical gates on average. If one considers the family of large CSS block codes (e.g. quantum LDPC codes) with $k/n\sim \Theta(1)$, only $O(k)$ physical qubits and $O(k^2/\log k)$ physical gates are needed on average to implement any logical Clifford circuit, with off-line circuit depth $O\left(\max\{k,k_{c_1}, k_{c_2}\}\right)$ for ancilla preparation. These results suggest that the numbers of qubits and gates required for logical Clifford circuits have the same scaling as the physical level, when the physical error rate is sufficiently low and good classical LDPC codes are used.

\begin{remark}
One can achieve very high efficiency of resource utilization with our scheme in the following scenario: a small batch of finite Clifford circuits are repeatedly applied in an algorithm for a certain period of time. This is because our distillation process has high throughput --- it can produce a large number of \emph{identical} ancilla states (and thus generate the same number of identical Clifford circuits), and each state preparation is efficient.
Several questions are raised here naturally: is there any useful quantum algorithm whose quantum circuits have such structure? In other words, is there an ansatz to adapt our FTQC architecture to design a circuit for some particular algorithm? Is there a good computation architecture to efficiently generate the large amount of identical ancilla states? These questions are all open and need further investigation.
One promising candidate here is the Hamiltonian simulation algorithm for quantum simulations~\cite{Lloyd:1996:1073,aspuru2005simulated, wecker2014gate,Hastings:2015_simulation,Poulin:2015qic_simulation,garnet2020quantum}. In that case, the target Hamiltonian changes slowly during the computation. In a certain period of time, the Trotter decomposition can be regarded as identical. Another candidate is the optimization type algorithms like
Quantum Approximate Optimization Algorithm (QAOA)~\cite{farhi2014quantumqaoa}, which needs to rapidly apply Hadamards.
On the other hand, to generate large mount ancilla states, Single Instruction Multiple Data (SIMD) style architecture~\cite{heckey2015compiler, risque2016characterization} that apply the same quantum gates on multiple qubits in the same region simultaneously, maybe particularly useful.
\end{remark}

\subsection{Summary}
In conclusion, we provided two methods implementing logical Clifford circuits fault-tolerantly via constant number of steps of Knill or Steane syndrome measurement circuits \emph{in-situ}. Each method requires certain types of logical stabilizer states as ancillas. We showed that all ancilla states listed can be prepared fault-tolerantly through two-stage distillation.
Our method transfers the complexity of Clifford circuits on logical level to the complexity of state preparation $U_{\text{prep}}$ on physical level completely, which can be done offline. Surprisingly, if one chooses large block codes with encoding rate $k/n\sim \Theta(1)$, the overall numbers of qubits and physical gates required for a Clifford circuit on $\llb n,k,d\rrb$ CSS circuit are around $O(k)$ and $O(k^2/\log k)$, respectively, which are independent of the distance of the underlying CSS code. This is the same scaling as a perfect Clifford circuit acting on $k$ physical qubits.

\section{Discussion}\label{sec:discussion}
In this section, we compare the method proposed in this paper with some other related fault-tolerant protocols in the literature including one-way quantum computation. Then we estimate the numbers of physical qubits and gates required for each scheme. The results are summarized in Table~\ref{table_resoure}. Since different FTQC schemes have different working regions and performance, these results only provide a rough insight of resource scaling. We also discuss the potential improvements on ancilla state preparation for further reduce the overhead.

\begin{table*}[tp!]
{\footnotesize
\begin{tabular}{c|c|c|c|c|c}
  \hline
  \hline
  Method/Average resource cost &\, \#. physical qubits \, &\, \#. physical operatation \,& \, \#. ancilla states \, & \,  \emph{in-situ} depth & off-line depth \rule{0pt}{2.6ex}\,\\[2pt]
   \hline
  \rule{0pt}{2.6ex}
     Standard circuit model & $O(k)$ & $O(k^2/\log k)$ & N/A & $O(k)$ & N/A\\[2pt]
   \emph{Circuit model FTQC (this paper)}\, & $O(k)$  & $O(k^2/\log k)$  & $O(1)$   & $O(1)$   & $O(\max\{k,k_{c_1},k_{c_2}\})$ \\[2pt]
   Circuit model FTQC (as in Ref.~\cite{Gottesman:1999:390, Zhou:2000:052316}) &  $O(\max \{k, wd \})$ & $O(\max\{k w d, k^2/\log k\})$ &  $O(kd)$ & $O(kd)$  & $O(kd)$ \\[2pt]
   Circuit model FTQC  (as in Ref.~\cite{steane1999efficient_Nature,steane2005fault,brun2015teleportation}) &  $O(k)$ & $O\left(k^4/(\log k)^2\right)$ &  $O(k^2/\log k)$   & $O(k^2/\log k)$ &  $O(\max\{k,k_{c_1},k_{c_2}\})$ \\[2pt]
   one-way QC (as in Ref.~\cite{Raussendorf:2003:022312})  & $O(k^3/\log k)$  & $O(k^3/\log k)$  & N/A  &   $O(1)$  & N/A \\[2pt]
   FT one-way QC (as in Ref.~\cite{Raussendorf:2006:2242,Raussendorf:2007:199}) & $O\left(k^3 d^3 / \log k \right)$ & $O\left(k^3 d^3 / \log k \right)$  & N/A  & $O(1)$&  N/A\\[2pt]
   Surface code (as in Ref~\cite{Folwer2012PhysRevA.86.032324}) & $O(kd^2)$ & $O(k^3d^3/\log k)$  & N/A  & $O(k^2d/\log k)$  & N/A\\[3pt]
  \hline
  \hline
\end{tabular}
}
\caption{\label{table_resoure} Resources required for a Clifford circuit on $k$ logical qubits at the physical and logical levels for an $\llb n,k, d \rrb$ CSS code. The physical operations counts all state preparation, gates, measurements including ancilla preparation and error correction. We assume $k/n\sim \Theta(1)$ for the large block codes and $w$ is the maximum weight of the stabilizers. Sufficient parallelization are also considered to minimize the depth.}
\end{table*}

\subsection{Related FTQC protocols}\label{sec:related_FTQC}

In this paper, we implement FT logical circuit teleportation via the
single-shot Knill syndrome measurement protocol and FT ancilla distillation. It is worthwhile to compare this with the original teleportation-based FTQC in Ref.~\cite{Gottesman:1999:390,Zhou:2000:052316}. Rather than Knill syndrome measurement, Shor syndrome measurement~\cite{Shor:1996:56} is used for error correction and ancilla state preparation. Our construction of logical Clifford circuits through a constant number of steps of teleportation is also possible in that scenario, where the ancilla state preparation again dominates the resource cost.

In Ref.~\cite{Gottesman:1999:390,Zhou:2000:052316}, $O(1)$ logical ancilla states of size $O(n)$ is required. To prepare qualified logical ancilla states, one applies Shor syndrome measurement to measure $n$ stabilizers, including stabilizer generators and logical operators. Each measurements needs one cat state. Each cat state consists $O(w)$ qubits, which takes $O(w)$ CNOTs to prepare, where $w$ is the maximum weight of the stabilizers. A verification is also required after the raw preparation of each cat state, which also takes $O(w)$ CNOTs and rejects the states with probability around $O(p)$. Transversal CNOTs from the verified cat state to the code block are then applied to extract the eigenvalues of the stabilizers, which takes $O(w)$ CNOTs. To establish reliable eigenvalues for the stabilizers of an $\llb n,k,d\rrb$ code, $O(d)$ rounds of Shor syndrome measurements and a majority vote are required for each stabilizer.  Thus the overall number of physical gates required for state preparation is $O(\max\{nwd, n^2/\log n\})$ with depth $O(nd)$.

For large block codes with $k/n\sim \Theta(1)$, the number of physical gates required for ancilla state preparation is $O(\max\{kwd, k^2/\log k\})$ with depth $O(kd)$. Meanwhile, $O(1)$ logical ancilla states are required, which needs $O(k)$ physical ancilla qubits altogether.  It also takes $O(kd)$ rounds of serial Shor syndrome measurements (since the stabilizers are in general highly overlapped) to do error correction on the data block, each round consumes a verified cat states. Hence, the depth for a logical Clifford circuit is $O(kd)$ and the same number of verified cat states are needed. Assuming the qubits supporting cat states are recycled after they are measured, one needs $O(wd)$ ancilla qubits for cat states throughout the process after parallelization. The number of all ancilla qubits is thus $O(\max\{k, wd\})$. This way of implementing logical Clifford circuits needs more physical gates when $w$ is large and takes a much longer computation time for large $k$ and $d$.

Our scheme also greatly simplifies the block-code based FTQC using Steane syndrome measurement in Ref.~\cite{steane1999efficient_Nature,steane2005fault,brun2015teleportation}. There, logical Clifford gates are implemented one by one, and thus $O(k^2/\log k)$ different ancilla states of size $2n$ qubits are required and $O(k^2/\log k)$ rounds of Steane syndrome measurements are needed. With the same ancilla distillation protocol and ancilla recycling, one needs $k^4/(\log k)^2$ gates and $O(k)$ qubits for every single circuit on average for finite rate codes with $k/n\sim \Theta(1)$.


%
%
%
%
%
%

\subsection{One-way quantum computing}
For one-way QC, one initially prepares a cluster state consisting of a large number of qubits. Quantum information is then loaded onto the cluster and processed through single-qubit measurements on the cluster state substrate. It can be shown that all quantum circuits can be mapped to the form of one-way QC. In general, one-qubit measurements are performed in a certain
temporal order and in a spatial pattern of adaptive measurement bases based on previous measurement outcomes.
Interestingly, for those qubits supporting Clifford circuits,
no measurement bases have to be adjusted (i.e., those of
which the operator $X$, $Y$ or $Z$ is measured). Thus, for any given quantum circuit, all of its Clifford gates can be realized
\emph{simultaneously} in the first round of single-qubit measurements, regardless of their space-time locations in the circuit~\cite{Raussendorf:2003:022312}, if a sufficiently large cluster state is provided to support the whole computation circuit. Specifically, for a cluster state in 2D, it requires $O(k^3/\log k)$ supporting cluster qubits and single-qubit measurements for an instantaneous Clifford circuit without error correction.

However, it is difficult to control errors if a cluster state large enough to support the entire quantum computation is used, since the computation might reach certain qubits only after a long time, so that these qubits would already suffer significant errors. This scheme is not fault-tolerant. By contrast, if the computation is split, then the size of sub-circuits may be adjusted so that each of them can be performed within some constant time. The measured qubits are then recycled to entangle with the unmeasured qubits to form a new cluster for the next computation step. In this way, each cluster qubit is exposed to constant decoherence time before being measured and the error rate is bounded. In this case, FT one-way QC is possible~\cite{Raussendorf:2003:022312}. Note that it is still possible to perform a Clifford circuit during the computation in one time step, if  $O(k^3/\log k)$ qubits are provided at the same time, but it is no longer possible to finish all the Clifford gates in the computation in a single time step.

To complete the discussion, here we consider FT one-way QC in 3D lattice as in Ref.~\cite{Raussendorf:2006:2242, Raussendorf:2007:199}. The Clifford circuits are performed through single-qubit measurements in the $Z$ basis to create topologically entangled defects in the 3D lattice. The remaining qubits are measured in the $X$ basis to provide syndrome information for 3D topological error correction~\cite{raussendorf2005long}. For $k$ encoded qubits in \cite{Raussendorf:2007:199} with distance $d$ boundary surface codes, the number of cluster qubits needed in a single 2D slice is $O(kd^2)$ and it requires $O(k^2/\log k)$ slices for an arbitrary Clifford circuit in the worst case. Thus, it takes the volume of a cluster state comprising $O(k^3d^3/\log k)$ qubits and the same number of single-qubit measurements. As a comparison, the variants of FT one-way QC in 2D based on the surface code~\cite{Folwer2012PhysRevA.86.032324} need $kd^2$ physical qubits for encoding and each logical CNOT gate takes $O(d)$ time steps.

In conclusion, even though one-way QC can in principle implement the Clifford gates of a circuit in a single time step, it requires many more physical qubits, whether it is implemented in a fault-tolerant manner or not. It is worth noting that the FT one-way QC and its 2D variants require only local operation, which is a great practical advantage, since the codes considered in our scheme are highly non-local in general. However, our results suggest the potential for huge resource reduction for FTQC if non-local operations are allowed.

\subsection{More efficient ancilla state preparation}
The distillation protocol mentioned in this paper is basically the same as the one in Ref.~\cite{zheng2017efficient}. The main difference is that the ancilla states distilled here can generate a whole circuit rather than a single gate on the data code block. These will cause an extra complexity on $U_{\text{prep}}$ stage in Fig.~\ref{fig:ft_distillation} up to $O(k^2 /\log k)$ gates with an extra depth $O(k)$. Meanwhile, the overall number of gates and depth for the whole distillation protocol ($U_{\text{prep}}$, $U_{\text{enc}}$ and $U_{\text{dist}}$ combined) also scale as $O(k^2/\log k)$ and $O(k)$. Note that the extra depth of ancilla preparation is negligible if they are produced in a pipeline manner. Thus, in the worst case, it will cause a constant decrease of distillation quality. But in practice, $U_{\text{prep}}$ may only take a small portion of the whole protocol. On the other hand, since we generate a circuit rather than a single gate at one time, it will reduce the quality requirement of the output ancilla states to support FTQC, and hence the error rate requirement for each operation as well. The overall effect of extra preparation complexity on distillation needs further exploration.

The two-stage distillation protocol gives $O(1)$ yield rate on average. However, the number of input ancilla blocks required simultaneously is $O(n_{c_1}n_{c_2})$, which can be huge in practice. Consequently, the distillation circuit is still relatively complicated and error can occur in many positions. As a result, the typical acceptable error rate (or threshold for distillation) is less than $10^{-4}$~\cite{zheng2017efficient}, which is challenging with current technologies like superconducting qubits~\cite{supremacy}. Meanwhile, in many cases, one doesn't need as many as $O(k_{c_1}k_{c_2})$ identical ancilla states to generate that large number of the same Clifford circuits.

There are two ways to further simplify the distillation process and reduce overall number of qubits: in stead of two-stage distillation, one can filter out one type of correlated errors at the beginning by single-round stabilizer measurements with Steane Latin rectangle method~\cite{steane2002fast}, which takes advantage of the fact that only a small set of  correlated errors needs to be removed according to the degeneracy of quantum code. Then we remove the other type of correlated errors through distillation. The depth of such preparation is also $O(\max\{k,k_c\})$
Here, only $O(n_{c})$ input code blocks are required simultaneously and it generates $O(k_{c})$ identical output states. It not only reduces the complexity of preparation circuit but also gives more flexibility. The second method is to take advantage of the symmetry of the underlying CSS codes: the qubits of different code blocks are permuted in different ways after raw preparation (though finding such permutation may be challenging), so that the correlation of errors between code blocks can be suppressed~\cite{paetznick_ben2011fault}. Consequently, it may require less input code blocks for distillation. In principle, these two methods can also be combined together and their effect needs further investigation.

\acknowledgments
The funding support from the National Research Foundation \& Ministry of Education, Singapore, is acknowledged. This work is also supported by the National Research Foundation of Singapore and Yale-NUS College (through grant number IG14-LR001 and a startup grant).
CYL was supported by the Ministry of Science and Technology, Taiwan under Grant MOST108-2636-E-009-004.
TAB was supported by NSF Grants No. CCF-1421078 and No. MPS-1719778, and by an IBM Einstein Fellowship at the Institute for Advanced Study.
\appendix

\vspace{2mm}

\section{Proof of Lemma~\ref{lemma:stabilizer_transform}}\label{sec:proof_lemma}
\begin{proof}
Let $l'_j$ denote the $j$th row vector of $L'$ and $c_p$ be the $p$th column vector of $(I_n \ I_n)^t$.
Equation~(\ref{eq:orthogonal}) is equivalent to
\beq
l'_j c_p = \delta_{jp}, \quad \quad  1\leq j, p\leq n, \label{eq:orthogonal2}
\eeq
where  $\delta$ is the Kroneker delta function.

Let $l_j$ denote the $j$th row vector of $L$. Obviously, $l_1=(1,0,\dots, 0)$, satisfying
$l_1 c_p = \delta_{1p}$. Let $l'_1=l_1$.

It is easy to see that $l_j c_p = 0$ for $p>j$, since $L_1$ is a lower triangular matrix. With all the diagonal elements of $L_1$ being~0, one has
\beq
l_jc_j=1.\label{eq:lc}
\eeq
Define the set $\mathscr{I}_j=\{p\ |\ l_{j}c_p =1, p < j \}$.
For $j=2,\dots,n$, let
\beq
l_{j}'=l_{j} + \sum_{p\in \mathscr{I}_j} l_p'. \label{eq:lp}
\eeq
We also define a matrix $L'^{(j)}$ that contains the rows $l_1',\dots,l_j'$: 
\beqs
L'^{(j)}=\left(
           \begin{array}{c}
             l_1' \\[2pt]
             \vdots \\[2pt]
             l_j' \\
           \end{array}
         \right).
\eeqs
Since $L_1$ is lower triangular, and the summation of  $l_p$  in Eq.~(\ref{eq:lp}) only counts the terms with  $p<j$, $L'^{(j)}$ can be written as
\beqs
L'^{(j)}=\left(L_2^{(j)}\ L_3^{(j)}\right),
\eeqs
where $L_2^{(j)}$ and $L_3^{(j)}$ are also lower triangular matrices. Eventually, we have $L_2=L_2^{(n)}$ and $L_3=L_3^{(n)}$.

It remains to prove Eq.~(\ref{eq:orthogonal2}).
We prove this by induction.
For $j = 2$, if $l_2 c_1 = 1$, one has $l_2'=l_2 + l_1$. Thus $l_2' c_1 = 0$ and $l_2'c_2=1$, since $l_1c_1=1$ and $l_1c_2=0$. Also, $l_2' c_p=0$ for $p>2$ since $L_2'^{(2)}$ and $L_3'^{(2)}$ are lower triangular matrices. So $l'_2 c_p = \delta_{2p}$ holds for $1\leq p \leq n$.

Now assume $l'_{1}c_p=\delta_{1p}$, $\dots,$ $l'_{j}c_p=\delta_{jp}$ holds. 
Then
\beqs
l_{j+1}'c_q=l_{j+1}c_q + \sum_{p\in \mathscr{I}_{j+1} } l_p'c_q.
\eeqs
Consider $q < j+1$ first. If $l_{j+1}c_q=1$, then $q\in \mathscr{I}_{j+1}$ and
\beqs
\sum_{p\in \mathscr{I}_{j+1} } l_p'c_q = \sum_{p\in \mathscr{I}_{j+1}} \delta_{pq}=1.
\eeqs
Then $l_{j+1}'c_q=0$.
If $l_{j+1}c_q=0$, then $q\notin \mathscr{I}_{j+1}$ and
$\sum_{p\in \mathscr{I}_{j+1}} l_p'c_q = 0$. Again, $l_{j+1}'c_q=0$. When $q=j+1$,
$l'_{j+1}c_{j+1} = l_{j+1}c_{j+1} = 1$ by Eq.~(\ref{eq:lc}).
For $q>j+1$, since $L^{(j+1)}_2$ and $L^{(j+1)}_3$ are both lower triangular, $l_{j+1}'c_q = 0$. Thus, $l'_jc_p=\delta_{jp}$ holds for $1\leq j,p\leq n$.

\end{proof}



\begin{thebibliography}{50}%
\makeatletter
\providecommand \@ifxundefined [1]{%
 \@ifx{#1\undefined}
}%
\providecommand \@ifnum [1]{%
 \ifnum #1\expandafter \@firstoftwo
 \else \expandafter \@secondoftwo
 \fi
}%
\providecommand \@ifx [1]{%
 \ifx #1\expandafter \@firstoftwo
 \else \expandafter \@secondoftwo
 \fi
}%
\providecommand \natexlab [1]{#1}%
\providecommand \enquote  [1]{``#1''}%
\providecommand \bibnamefont  [1]{#1}%
\providecommand \bibfnamefont [1]{#1}%
\providecommand \citenamefont [1]{#1}%
\providecommand \href@noop [0]{\@secondoftwo}%
\providecommand \href [0]{\begingroup \@sanitize@url \@href}%
\providecommand \@href[1]{\@@startlink{#1}\@@href}%
\providecommand \@@href[1]{\endgroup#1\@@endlink}%
\providecommand \@sanitize@url [0]{\catcode `\\12\catcode `\$12\catcode
  `\&12\catcode `\#12\catcode `\^12\catcode `\_12\catcode `\%12\relax}%
\providecommand \@@startlink[1]{}%
\providecommand \@@endlink[0]{}%
\providecommand \url  [0]{\begingroup\@sanitize@url \@url }%
\providecommand \@url [1]{\endgroup\@href {#1}{\urlprefix }}%
\providecommand \urlprefix  [0]{URL }%
\providecommand \Eprint [0]{\href }%
\providecommand \doibase [0]{http://dx.doi.org/}%
\providecommand \selectlanguage [0]{\@gobble}%
\providecommand \bibinfo  [0]{\@secondoftwo}%
\providecommand \bibfield  [0]{\@secondoftwo}%
\providecommand \translation [1]{[#1]}%
\providecommand \BibitemOpen [0]{}%
\providecommand \bibitemStop [0]{}%
\providecommand \bibitemNoStop [0]{.\EOS\space}%
\providecommand \EOS [0]{\spacefactor3000\relax}%
\providecommand \BibitemShut  [1]{\csname bibitem#1\endcsname}%
\let\auto@bib@innerbib\@empty
\bibitem [{\citenamefont {Shor}(1995)}]{Shor:1995:R2493}%
  \BibitemOpen
  \bibfield  {author} {\bibinfo {author} {\bibfnamefont {P.~W.}\ \bibnamefont
  {Shor}},\ }\href@noop {} {\bibfield  {journal} {\bibinfo  {journal} {Phys.
  Rev. A}\ }\textbf {\bibinfo {volume} {52}},\ \bibinfo {pages} {R2493}
  (\bibinfo {year} {1995})}\BibitemShut {NoStop}%
\bibitem [{\citenamefont {Steane}(1996)}]{Steane:1996:793}%
  \BibitemOpen
  \bibfield  {author} {\bibinfo {author} {\bibfnamefont {A.~M.}\ \bibnamefont
  {Steane}},\ }\href@noop {} {\bibfield  {journal} {\bibinfo  {journal} {Phys.
  Rev. Lett.}\ }\textbf {\bibinfo {volume} {77}},\ \bibinfo {pages} {793}
  (\bibinfo {year} {1996})}\BibitemShut {NoStop}%
\bibitem [{\citenamefont {Calderbank}\ and\ \citenamefont
  {Shor}(1996)}]{Calderbank:1996:1098}%
  \BibitemOpen
  \bibfield  {author} {\bibinfo {author} {\bibfnamefont {A.~R.}\ \bibnamefont
  {Calderbank}}\ and\ \bibinfo {author} {\bibfnamefont {P.~W.}\ \bibnamefont
  {Shor}},\ }\href@noop {} {\bibfield  {journal} {\bibinfo  {journal} {Phys.
  Rev. A}\ }\textbf {\bibinfo {volume} {54}},\ \bibinfo {pages} {1098}
  (\bibinfo {year} {1996})}\BibitemShut {NoStop}%
\bibitem [{\citenamefont {Gaitan}(2008)}]{Gaitan:2008:CRC}%
  \BibitemOpen
  \bibfield  {author} {\bibinfo {author} {\bibfnamefont {F.}~\bibnamefont
  {Gaitan}},\ }\href@noop {} {\emph {\bibinfo {title} {Quantum Error Correction
  and Fault Tolerant Quantum Computing}}}\ (\bibinfo  {publisher} {CRC},\
  \bibinfo {year} {2008})\BibitemShut {NoStop}%
\bibitem [{\citenamefont {Lidar}\ and\ \citenamefont
  {Brun}(2013)}]{QECbook:2013}%
  \BibitemOpen
  \bibfield  {author} {\bibinfo {author} {\bibfnamefont {D.}~\bibnamefont
  {Lidar}}\ and\ \bibinfo {author} {\bibfnamefont {T.}~\bibnamefont {Brun}},\
  }\href@noop {} {\emph {\bibinfo {title} {Quantum Error Correction}}}\
  (\bibinfo  {publisher} {Cambridge University Press, Cambridge},\ \bibinfo
  {year} {2013})\BibitemShut {NoStop}%
\bibitem [{\citenamefont {Shor}(1996)}]{Shor:1996:56}%
  \BibitemOpen
  \bibfield  {author} {\bibinfo {author} {\bibfnamefont {P.}~\bibnamefont
  {Shor}},\ }in\ \href@noop {} {\emph {\bibinfo {booktitle} {Proc. 37$^{th}$
  Annual Symposium on Foundations of Computer Science}}}\ (\bibinfo
  {publisher} {IEEE Computer Society Press},\ \bibinfo {address} {Los Alamitos,
  CA},\ \bibinfo {year} {1996})\ p.~\bibinfo {pages} {56}\BibitemShut {NoStop}%
\bibitem [{\citenamefont {Aharonov}\ and\ \citenamefont
  {Ben-Or}(1997)}]{Aharonov:1997:176}%
  \BibitemOpen
  \bibfield  {author} {\bibinfo {author} {\bibfnamefont {D.}~\bibnamefont
  {Aharonov}}\ and\ \bibinfo {author} {\bibfnamefont {M.}~\bibnamefont
  {Ben-Or}},\ }in\ \href@noop {} {\emph {\bibinfo {booktitle} {Proc. 29$^{th}$
  Annual ACM Symposium on the Theory of Computation}}}\ (\bibinfo  {publisher}
  {ACM Press},\ \bibinfo {address} {New York},\ \bibinfo {year} {1997})\ p.\
  \bibinfo {pages} {176}\BibitemShut {NoStop}%
\bibitem [{\citenamefont {{Gottesman}}(1997)}]{Gottesman:9705052}%
  \BibitemOpen
  \bibfield  {author} {\bibinfo {author} {\bibfnamefont {D.}~\bibnamefont
  {{Gottesman}}},\ }\emph {\bibinfo {title} {Stabilizer codes and quantum error
  correction}},\ \href@noop {} {Ph.D. thesis},\ \bibinfo  {school} {California
  Institute of Technology} (\bibinfo {year} {1997}),\ \bibinfo {note} {eprint
  arXiv:quant-ph/9705052}\BibitemShut {NoStop}%
\bibitem [{\citenamefont {Kitaev}(2003)}]{Kitaev:2003:2}%
  \BibitemOpen
  \bibfield  {author} {\bibinfo {author} {\bibfnamefont {A.}~\bibnamefont
  {Kitaev}},\ }\href@noop {} {\bibfield  {journal} {\bibinfo  {journal} {Ann.
  of Phys.}\ }\textbf {\bibinfo {volume} {303}},\ \bibinfo {pages} {2}
  (\bibinfo {year} {2003})}\BibitemShut {NoStop}%
\bibitem [{\citenamefont {DiVincenzo}\ and\ \citenamefont
  {Shor}(1996)}]{DivencenzoFTPhysRevLett.77.3260}%
  \BibitemOpen
  \bibfield  {author} {\bibinfo {author} {\bibfnamefont {D.~P.}\ \bibnamefont
  {DiVincenzo}}\ and\ \bibinfo {author} {\bibfnamefont {P.~W.}\ \bibnamefont
  {Shor}},\ }\href@noop {} {\bibfield  {journal} {\bibinfo  {journal} {Phys.
  Rev. Lett.}\ }\textbf {\bibinfo {volume} {77}},\ \bibinfo {pages} {3260}
  (\bibinfo {year} {1996})}\BibitemShut {NoStop}%
\bibitem [{\citenamefont {Knill}(2005)}]{KnillFTNature}%
  \BibitemOpen
  \bibfield  {author} {\bibinfo {author} {\bibfnamefont {E.}~\bibnamefont
  {Knill}},\ }\href@noop {} {\bibfield  {journal} {\bibinfo  {journal} {Nature
  (London)}\ }\textbf {\bibinfo {volume} {434}},\ \bibinfo {pages} {39}
  (\bibinfo {year} {2005})}\BibitemShut {NoStop}%
\bibitem [{\citenamefont {Aharonov}\ \emph {et~al.}(2006)\citenamefont
  {Aharonov}, \citenamefont {Kitaev},\ and\ \citenamefont
  {Preskill}}]{Aharonov:2006:050504}%
  \BibitemOpen
  \bibfield  {author} {\bibinfo {author} {\bibfnamefont {D.}~\bibnamefont
  {Aharonov}}, \bibinfo {author} {\bibfnamefont {A.}~\bibnamefont {Kitaev}}, \
  and\ \bibinfo {author} {\bibfnamefont {J.}~\bibnamefont {Preskill}},\
  }\href@noop {} {\bibfield  {journal} {\bibinfo  {journal} {Phys. Rev. Lett.}\
  }\textbf {\bibinfo {volume} {96}},\ \bibinfo {pages} {050504} (\bibinfo
  {year} {2006})}\BibitemShut {NoStop}%
\bibitem [{\citenamefont {Terhal}\ and\ \citenamefont
  {Burkard}(2005)}]{Terhal:2005:012336}%
  \BibitemOpen
  \bibfield  {author} {\bibinfo {author} {\bibfnamefont {B.~M.}\ \bibnamefont
  {Terhal}}\ and\ \bibinfo {author} {\bibfnamefont {G.}~\bibnamefont
  {Burkard}},\ }\href@noop {} {\bibfield  {journal} {\bibinfo  {journal} {Phys.
  Rev. A}\ }\textbf {\bibinfo {volume} {71}},\ \bibinfo {pages} {012336}
  (\bibinfo {year} {2005})}\BibitemShut {NoStop}%
\bibitem [{\citenamefont {Aliferis}\ \emph {et~al.}(2006)\citenamefont
  {Aliferis}, \citenamefont {Gottesman},\ and\ \citenamefont
  {Preskill}}]{Aliferis:2006:97}%
  \BibitemOpen
  \bibfield  {author} {\bibinfo {author} {\bibfnamefont {P.}~\bibnamefont
  {Aliferis}}, \bibinfo {author} {\bibfnamefont {D.}~\bibnamefont {Gottesman}},
  \ and\ \bibinfo {author} {\bibfnamefont {J.}~\bibnamefont {Preskill}},\
  }\href@noop {} {\bibfield  {journal} {\bibinfo  {journal} {Quantum Inf.
  Comput.}\ }\textbf {\bibinfo {volume} {6}},\ \bibinfo {pages} {97} (\bibinfo
  {year} {2006})}\BibitemShut {NoStop}%
\bibitem [{\citenamefont {Cross}\ \emph {et~al.}(2009)\citenamefont {Cross},
  \citenamefont {Divincenzo},\ and\ \citenamefont
  {Terhal}}]{cross2007comparative}%
  \BibitemOpen
  \bibfield  {author} {\bibinfo {author} {\bibfnamefont {A.~W.}\ \bibnamefont
  {Cross}}, \bibinfo {author} {\bibfnamefont {D.~P.}\ \bibnamefont
  {Divincenzo}}, \ and\ \bibinfo {author} {\bibfnamefont {B.~M.}\ \bibnamefont
  {Terhal}},\ }\href@noop {} {\bibfield  {journal} {\bibinfo  {journal}
  {Quantum Inf. Comput.}\ }\textbf {\bibinfo {volume} {9}},\ \bibinfo {pages}
  {0541} (\bibinfo {year} {2009})}\BibitemShut {NoStop}%
\bibitem [{\citenamefont {Aliferis}\ \emph {et~al.}(2008)\citenamefont
  {Aliferis}, \citenamefont {Gottesman},\ and\ \citenamefont
  {Preskill}}]{Aliferis:2008:181}%
  \BibitemOpen
  \bibfield  {author} {\bibinfo {author} {\bibfnamefont {P.}~\bibnamefont
  {Aliferis}}, \bibinfo {author} {\bibfnamefont {D.}~\bibnamefont {Gottesman}},
  \ and\ \bibinfo {author} {\bibfnamefont {J.}~\bibnamefont {Preskill}},\
  }\href@noop {} {\bibfield  {journal} {\bibinfo  {journal} {Quantum Inf.
  Comput.}\ }\textbf {\bibinfo {volume} {8}},\ \bibinfo {pages} {181} (\bibinfo
  {year} {2008})}\BibitemShut {NoStop}%
\bibitem [{\citenamefont {Fowler}\ \emph {et~al.}(2012)\citenamefont {Fowler},
  \citenamefont {Mariantoni}, \citenamefont {Martinis},\ and\ \citenamefont
  {Cleland}}]{Folwer2012PhysRevA.86.032324}%
  \BibitemOpen
  \bibfield  {author} {\bibinfo {author} {\bibfnamefont {A.~G.}\ \bibnamefont
  {Fowler}}, \bibinfo {author} {\bibfnamefont {M.}~\bibnamefont {Mariantoni}},
  \bibinfo {author} {\bibfnamefont {J.~M.}\ \bibnamefont {Martinis}}, \ and\
  \bibinfo {author} {\bibfnamefont {A.~N.}\ \bibnamefont {Cleland}},\
  }\href@noop {} {\bibfield  {journal} {\bibinfo  {journal} {Phys. Rev. A}\
  }\textbf {\bibinfo {volume} {86}},\ \bibinfo {pages} {032324} (\bibinfo
  {year} {2012})}\BibitemShut {NoStop}%
\bibitem [{\citenamefont {Bombin}\ and\ \citenamefont
  {Martin-Delgado}(2006)}]{Bombin:2006:180501}%
  \BibitemOpen
  \bibfield  {author} {\bibinfo {author} {\bibfnamefont {H.}~\bibnamefont
  {Bombin}}\ and\ \bibinfo {author} {\bibfnamefont {M.}~\bibnamefont
  {Martin-Delgado}},\ }\href@noop {} {\bibfield  {journal} {\bibinfo  {journal}
  {Phys. Rev. Lett.}\ }\textbf {\bibinfo {volume} {97}},\ \bibinfo {pages}
  {180501} (\bibinfo {year} {2006})}\BibitemShut {NoStop}%
\bibitem [{\citenamefont {Steane}(1999)}]{steane1999efficient_Nature}%
  \BibitemOpen
  \bibfield  {author} {\bibinfo {author} {\bibfnamefont {A.~M.}\ \bibnamefont
  {Steane}},\ }\href@noop {} {\bibfield  {journal} {\bibinfo  {journal}
  {Nature}\ }\textbf {\bibinfo {volume} {399}},\ \bibinfo {pages} {124}
  (\bibinfo {year} {1999})}\BibitemShut {NoStop}%
\bibitem [{\citenamefont {Steane}\ and\ \citenamefont
  {Ibinson}(2005)}]{steane2005fault}%
  \BibitemOpen
  \bibfield  {author} {\bibinfo {author} {\bibfnamefont {A.~M.}\ \bibnamefont
  {Steane}}\ and\ \bibinfo {author} {\bibfnamefont {B.}~\bibnamefont
  {Ibinson}},\ }\href@noop {} {\bibfield  {journal} {\bibinfo  {journal} {Phy.
  Rev. A}\ }\textbf {\bibinfo {volume} {72}},\ \bibinfo {pages} {052335}
  (\bibinfo {year} {2005})}\BibitemShut {NoStop}%
\bibitem [{\citenamefont {Brun}\ \emph {et~al.}(2015)\citenamefont {Brun},
  \citenamefont {Zheng}, \citenamefont {Hsu}, \citenamefont {Job},\ and\
  \citenamefont {Lai}}]{brun2015teleportation}%
  \BibitemOpen
  \bibfield  {author} {\bibinfo {author} {\bibfnamefont {T.~A.}\ \bibnamefont
  {Brun}}, \bibinfo {author} {\bibfnamefont {Y.-C.}\ \bibnamefont {Zheng}},
  \bibinfo {author} {\bibfnamefont {K.-C.}\ \bibnamefont {Hsu}}, \bibinfo
  {author} {\bibfnamefont {J.}~\bibnamefont {Job}}, \ and\ \bibinfo {author}
  {\bibfnamefont {C.-Y.}\ \bibnamefont {Lai}},\ }\href@noop {} {\bibfield
  {journal} {\bibinfo  {journal} {arXiv preprint arXiv:1504.03913}\ } (\bibinfo
  {year} {2015})}\BibitemShut {NoStop}%
\bibitem [{\citenamefont {Steane}(2003)}]{Steane:2003:042322}%
  \BibitemOpen
  \bibfield  {author} {\bibinfo {author} {\bibfnamefont {A.~M.}\ \bibnamefont
  {Steane}},\ }\href@noop {} {\bibfield  {journal} {\bibinfo  {journal} {Phys.
  Rev. A}\ }\textbf {\bibinfo {volume} {68}},\ \bibinfo {pages} {042322}
  (\bibinfo {year} {2003})}\BibitemShut {NoStop}%
\bibitem [{\citenamefont {Gottesman}(2014)}]{gottesman2013overhead}%
  \BibitemOpen
  \bibfield  {author} {\bibinfo {author} {\bibfnamefont {D.}~\bibnamefont
  {Gottesman}},\ }\href@noop {} {\bibfield  {journal} {\bibinfo  {journal}
  {Quantum Inf. Comput.}\ }\textbf {\bibinfo {volume} {14}},\ \bibinfo {pages}
  {1338} (\bibinfo {year} {2014})}\BibitemShut {NoStop}%
\bibitem [{\citenamefont {Steane}(1997)}]{steane1997active}%
  \BibitemOpen
  \bibfield  {author} {\bibinfo {author} {\bibfnamefont {A.~M.}\ \bibnamefont
  {Steane}},\ }\href@noop {} {\bibfield  {journal} {\bibinfo  {journal} {Phys.
  Rev. Lett.}\ }\textbf {\bibinfo {volume} {78}},\ \bibinfo {pages} {2252}
  (\bibinfo {year} {1997})}\BibitemShut {NoStop}%
\bibitem [{\citenamefont {Gottesman}\ and\ \citenamefont
  {Chuang}(1999)}]{Gottesman:1999:390}%
  \BibitemOpen
  \bibfield  {author} {\bibinfo {author} {\bibfnamefont {D.}~\bibnamefont
  {Gottesman}}\ and\ \bibinfo {author} {\bibfnamefont {I.}~\bibnamefont
  {Chuang}},\ }\href@noop {} {\bibfield  {journal} {\bibinfo  {journal}
  {Nature}\ }\textbf {\bibinfo {volume} {402}},\ \bibinfo {pages} {390}
  (\bibinfo {year} {1999})}\BibitemShut {NoStop}%
\bibitem [{\citenamefont {Zhou}\ \emph {et~al.}(2000)\citenamefont {Zhou},
  \citenamefont {Leung},\ and\ \citenamefont {Chuang}}]{Zhou:2000:052316}%
  \BibitemOpen
  \bibfield  {author} {\bibinfo {author} {\bibfnamefont {X.}~\bibnamefont
  {Zhou}}, \bibinfo {author} {\bibfnamefont {D.~W.}\ \bibnamefont {Leung}}, \
  and\ \bibinfo {author} {\bibfnamefont {I.~L.}\ \bibnamefont {Chuang}},\
  }\href@noop {} {\bibfield  {journal} {\bibinfo  {journal} {Phys. Rev. A}\
  }\textbf {\bibinfo {volume} {62}},\ \bibinfo {pages} {052316} (\bibinfo
  {year} {2000})}\BibitemShut {NoStop}%
\bibitem [{\citenamefont {Patel}\ \emph {et~al.}(2008)\citenamefont {Patel},
  \citenamefont {Markov},\ and\ \citenamefont {Hayes}}]{markov2008optimal}%
  \BibitemOpen
  \bibfield  {author} {\bibinfo {author} {\bibfnamefont {K.~N.}\ \bibnamefont
  {Patel}}, \bibinfo {author} {\bibfnamefont {I.~L.}\ \bibnamefont {Markov}}, \
  and\ \bibinfo {author} {\bibfnamefont {J.~P.}\ \bibnamefont {Hayes}},\
  }\href@noop {} {\bibfield  {journal} {\bibinfo  {journal} {Quantum Inf.
  Comput.}\ }\textbf {\bibinfo {volume} {8}},\ \bibinfo {pages} {0282}
  (\bibinfo {year} {2008})}\BibitemShut {NoStop}%
\bibitem [{\citenamefont {Aaronson}\ and\ \citenamefont
  {Gottesman}(2004)}]{aaronson2004improved}%
  \BibitemOpen
  \bibfield  {author} {\bibinfo {author} {\bibfnamefont {S.}~\bibnamefont
  {Aaronson}}\ and\ \bibinfo {author} {\bibfnamefont {D.}~\bibnamefont
  {Gottesman}},\ }\href@noop {} {\bibfield  {journal} {\bibinfo  {journal}
  {Phys. Rev. A}\ }\textbf {\bibinfo {volume} {70}},\ \bibinfo {pages} {052328}
  (\bibinfo {year} {2004})}\BibitemShut {NoStop}%
\bibitem [{\citenamefont {Lai}\ \emph {et~al.}(2017)\citenamefont {Lai},
  \citenamefont {Zheng},\ and\ \citenamefont {Brun}}]{Ancilla_distillation_1}%
  \BibitemOpen
  \bibfield  {author} {\bibinfo {author} {\bibfnamefont {C.-Y.}\ \bibnamefont
  {Lai}}, \bibinfo {author} {\bibfnamefont {Y.-C.}\ \bibnamefont {Zheng}}, \
  and\ \bibinfo {author} {\bibfnamefont {T.~A.}\ \bibnamefont {Brun}},\
  }\href@noop {} {\bibfield  {journal} {\bibinfo  {journal} {Phys. Rev. A}\
  }\textbf {\bibinfo {volume} {95}},\ \bibinfo {pages} {032339} (\bibinfo
  {year} {2017})}\BibitemShut {NoStop}%
\bibitem [{\citenamefont {Zheng}\ \emph {et~al.}(2018)\citenamefont {Zheng},
  \citenamefont {Lai},\ and\ \citenamefont {Brun}}]{zheng2017efficient}%
  \BibitemOpen
  \bibfield  {author} {\bibinfo {author} {\bibfnamefont {Y.-C.}\ \bibnamefont
  {Zheng}}, \bibinfo {author} {\bibfnamefont {C.-Y.}\ \bibnamefont {Lai}}, \
  and\ \bibinfo {author} {\bibfnamefont {T.~A.}\ \bibnamefont {Brun}},\
  }\href@noop {} {\bibfield  {journal} {\bibinfo  {journal} {Phys. Rev. A}\
  }\textbf {\bibinfo {volume} {97}},\ \bibinfo {pages} {032331} (\bibinfo
  {year} {2018})}\BibitemShut {NoStop}%
\bibitem [{Note1()}]{Note1}%
  \BibitemOpen
  \bibinfo {note} {Such extra layer has depth $O(1)$. Throughout the paper,
  Pauli gates are assumed to be free and can be directly applied to qubits.
  This is also true in FTQC using stabilizer codes, where logical Pauli
  operators are easy to realize.}\BibitemShut {Stop}%
\bibitem [{\citenamefont {Maslov}\ and\ \citenamefont
  {Roetteler}(2018)}]{maslov2017Bruhat}%
  \BibitemOpen
  \bibfield  {author} {\bibinfo {author} {\bibfnamefont {D.}~\bibnamefont
  {Maslov}}\ and\ \bibinfo {author} {\bibfnamefont {M.}~\bibnamefont
  {Roetteler}},\ }\href@noop {} {\bibfield  {journal} {\bibinfo  {journal}
  {IEEE Trans. Inf. Theory}\ }\textbf {\bibinfo {volume} {64}},\ \bibinfo
  {pages} {4729} (\bibinfo {year} {2018})}\BibitemShut {NoStop}%
\bibitem [{\citenamefont {Chamberland}\ and\ \citenamefont
  {Ronagh}(2018)}]{chamberland2018deep}%
  \BibitemOpen
  \bibfield  {author} {\bibinfo {author} {\bibfnamefont {C.}~\bibnamefont
  {Chamberland}}\ and\ \bibinfo {author} {\bibfnamefont {P.}~\bibnamefont
  {Ronagh}},\ }\href@noop {} {\bibfield  {journal} {\bibinfo  {journal}
  {Quantum Sci.Tech.}\ }\textbf {\bibinfo {volume} {3}},\ \bibinfo {pages}
  {044002} (\bibinfo {year} {2018})}\BibitemShut {NoStop}%
\bibitem [{\citenamefont {Steane}(2002)}]{steane2002fast}%
  \BibitemOpen
  \bibfield  {author} {\bibinfo {author} {\bibfnamefont {A.~M.}\ \bibnamefont
  {Steane}},\ }\href@noop {} {\bibfield  {journal} {\bibinfo  {journal} {arXiv
  preprint quant-ph/0202036}\ } (\bibinfo {year} {2002})}\BibitemShut {NoStop}%
\bibitem [{\citenamefont
  {MacKay}(2003)}]{MacKay:2003:CambridgeUniversityPress}%
  \BibitemOpen
  \bibfield  {author} {\bibinfo {author} {\bibfnamefont {D.~J.~C.}\
  \bibnamefont {MacKay}},\ }\href@noop {} {\emph {\bibinfo {title} {Information
  Theory, Inference and Learning Algorithms}}}\ (\bibinfo  {publisher}
  {Cambridge University Press},\ \bibinfo {address} {Cambridge, UK},\ \bibinfo
  {year} {2003})\BibitemShut {NoStop}%
\bibitem [{\citenamefont {Lloyd}(1996)}]{Lloyd:1996:1073}%
  \BibitemOpen
  \bibfield  {author} {\bibinfo {author} {\bibfnamefont {S.}~\bibnamefont
  {Lloyd}},\ }\href@noop {} {\bibfield  {journal} {\bibinfo  {journal}
  {Science}\ }\textbf {\bibinfo {volume} {273}},\ \bibinfo {pages} {1073}
  (\bibinfo {year} {1996})}\BibitemShut {NoStop}%
\bibitem [{\citenamefont {Aspuru-Guzik}\ \emph {et~al.}(2005)\citenamefont
  {Aspuru-Guzik}, \citenamefont {Dutoi}, \citenamefont {Love},\ and\
  \citenamefont {Head-Gordon}}]{aspuru2005simulated}%
  \BibitemOpen
  \bibfield  {author} {\bibinfo {author} {\bibfnamefont {A.}~\bibnamefont
  {Aspuru-Guzik}}, \bibinfo {author} {\bibfnamefont {A.~D.}\ \bibnamefont
  {Dutoi}}, \bibinfo {author} {\bibfnamefont {P.~J.}\ \bibnamefont {Love}}, \
  and\ \bibinfo {author} {\bibfnamefont {M.}~\bibnamefont {Head-Gordon}},\
  }\href {\doibase 10.1126/science.1113479} {\bibfield  {journal} {\bibinfo
  {journal} {Science}\ }\textbf {\bibinfo {volume} {309}},\ \bibinfo {pages}
  {1704} (\bibinfo {year} {2005})}\BibitemShut {NoStop}%
\bibitem [{\citenamefont {Wecker}\ \emph {et~al.}(2014)\citenamefont {Wecker},
  \citenamefont {Bauer}, \citenamefont {Clark}, \citenamefont {Hastings},\ and\
  \citenamefont {Troyer}}]{wecker2014gate}%
  \BibitemOpen
  \bibfield  {author} {\bibinfo {author} {\bibfnamefont {D.}~\bibnamefont
  {Wecker}}, \bibinfo {author} {\bibfnamefont {B.}~\bibnamefont {Bauer}},
  \bibinfo {author} {\bibfnamefont {B.~K.}\ \bibnamefont {Clark}}, \bibinfo
  {author} {\bibfnamefont {M.~B.}\ \bibnamefont {Hastings}}, \ and\ \bibinfo
  {author} {\bibfnamefont {M.}~\bibnamefont {Troyer}},\ }\href@noop {}
  {\bibfield  {journal} {\bibinfo  {journal} {Phys. Rev. A}\ }\textbf {\bibinfo
  {volume} {90}},\ \bibinfo {pages} {022305} (\bibinfo {year}
  {2014})}\BibitemShut {NoStop}%
\bibitem [{\citenamefont {Hastings}\ \emph {et~al.}(2015)\citenamefont
  {Hastings}, \citenamefont {Wecker}, \citenamefont {Bauer},\ and\
  \citenamefont {Troyer}}]{Hastings:2015_simulation}%
  \BibitemOpen
  \bibfield  {author} {\bibinfo {author} {\bibfnamefont {M.~B.}\ \bibnamefont
  {Hastings}}, \bibinfo {author} {\bibfnamefont {D.}~\bibnamefont {Wecker}},
  \bibinfo {author} {\bibfnamefont {B.}~\bibnamefont {Bauer}}, \ and\ \bibinfo
  {author} {\bibfnamefont {M.}~\bibnamefont {Troyer}},\ }\href@noop {}
  {\bibfield  {journal} {\bibinfo  {journal} {Quantum Inf. Comput.}\ }\textbf
  {\bibinfo {volume} {15}},\ \bibinfo {pages} {1} (\bibinfo {year}
  {2015})}\BibitemShut {NoStop}%
\bibitem [{\citenamefont {Poulin}\ \emph {et~al.}(2015)\citenamefont {Poulin},
  \citenamefont {Hastings}, \citenamefont {Wecker}, \citenamefont {Wiebe},
  \citenamefont {Doberty},\ and\ \citenamefont
  {Troyer}}]{Poulin:2015qic_simulation}%
  \BibitemOpen
  \bibfield  {author} {\bibinfo {author} {\bibfnamefont {D.}~\bibnamefont
  {Poulin}}, \bibinfo {author} {\bibfnamefont {M.~B.}\ \bibnamefont
  {Hastings}}, \bibinfo {author} {\bibfnamefont {D.}~\bibnamefont {Wecker}},
  \bibinfo {author} {\bibfnamefont {N.}~\bibnamefont {Wiebe}}, \bibinfo
  {author} {\bibfnamefont {A.~C.}\ \bibnamefont {Doberty}}, \ and\ \bibinfo
  {author} {\bibfnamefont {M.}~\bibnamefont {Troyer}},\ }\href@noop {}
  {\bibfield  {journal} {\bibinfo  {journal} {Quantum Inf. Comput.}\ }\textbf
  {\bibinfo {volume} {15}},\ \bibinfo {pages} {361} (\bibinfo {year}
  {2015})}\BibitemShut {NoStop}%
\bibitem [{\citenamefont {Bauer}\ \emph {et~al.}(2020)\citenamefont {Bauer},
  \citenamefont {Bravyi}, \citenamefont {Motta},\ and\ \citenamefont
  {Chan}}]{garnet2020quantum}%
  \BibitemOpen
  \bibfield  {author} {\bibinfo {author} {\bibfnamefont {B.}~\bibnamefont
  {Bauer}}, \bibinfo {author} {\bibfnamefont {S.}~\bibnamefont {Bravyi}},
  \bibinfo {author} {\bibfnamefont {M.}~\bibnamefont {Motta}}, \ and\ \bibinfo
  {author} {\bibfnamefont {G.~K.}\ \bibnamefont {Chan}},\ }\href@noop {}
  {\bibfield  {journal} {\bibinfo  {journal} {arXiv preprint arXiv:2001.03685}\
  } (\bibinfo {year} {2020})}\BibitemShut {NoStop}%
\bibitem [{\citenamefont {Farhi}\ \emph {et~al.}(2014)\citenamefont {Farhi},
  \citenamefont {Goldstone},\ and\ \citenamefont
  {Gutmann}}]{farhi2014quantumqaoa}%
  \BibitemOpen
  \bibfield  {author} {\bibinfo {author} {\bibfnamefont {E.}~\bibnamefont
  {Farhi}}, \bibinfo {author} {\bibfnamefont {J.}~\bibnamefont {Goldstone}}, \
  and\ \bibinfo {author} {\bibfnamefont {S.}~\bibnamefont {Gutmann}},\
  }\href@noop {} {\bibfield  {journal} {\bibinfo  {journal} {arXiv preprint
  arXiv:1411.4028}\ } (\bibinfo {year} {2014})}\BibitemShut {NoStop}%
\bibitem [{\citenamefont {Heckey}\ \emph {et~al.}(2015)\citenamefont {Heckey},
  \citenamefont {Patil}, \citenamefont {JavadiAbhari}, \citenamefont {Holmes},
  \citenamefont {Kudrow}, \citenamefont {Brown}, \citenamefont {Franklin},
  \citenamefont {Chong},\ and\ \citenamefont {Martonosi}}]{heckey2015compiler}%
  \BibitemOpen
  \bibfield  {author} {\bibinfo {author} {\bibfnamefont {J.}~\bibnamefont
  {Heckey}}, \bibinfo {author} {\bibfnamefont {S.}~\bibnamefont {Patil}},
  \bibinfo {author} {\bibfnamefont {A.}~\bibnamefont {JavadiAbhari}}, \bibinfo
  {author} {\bibfnamefont {A.}~\bibnamefont {Holmes}}, \bibinfo {author}
  {\bibfnamefont {D.}~\bibnamefont {Kudrow}}, \bibinfo {author} {\bibfnamefont
  {K.~R.}\ \bibnamefont {Brown}}, \bibinfo {author} {\bibfnamefont
  {D.}~\bibnamefont {Franklin}}, \bibinfo {author} {\bibfnamefont {F.~T.}\
  \bibnamefont {Chong}}, \ and\ \bibinfo {author} {\bibfnamefont
  {M.}~\bibnamefont {Martonosi}},\ }in\ \href@noop {} {\emph {\bibinfo
  {booktitle} {Proc. 29$^{th}$ International Conference on Architectural
  Support for Programming Languages and Operating Systems (ASPLOS)}}}\
  (\bibinfo {organization} {ACM Press, Istanbul, Turkey},\ \bibinfo {year}
  {2015})\ pp.\ \bibinfo {pages} {445--456}\BibitemShut {NoStop}%
\bibitem [{\citenamefont {Risque}\ and\ \citenamefont
  {Jog}(2016)}]{risque2016characterization}%
  \BibitemOpen
  \bibfield  {author} {\bibinfo {author} {\bibfnamefont {R.}~\bibnamefont
  {Risque}}\ and\ \bibinfo {author} {\bibfnamefont {A.}~\bibnamefont {Jog}},\
  }in\ \href@noop {} {\emph {\bibinfo {booktitle} {IEEE International Symposium
  on Workload Characterization (IISWC)}}}\ (\bibinfo {organization} {IEEE
  Computer Society Press, Providence, RI},\ \bibinfo {year} {2016})\ pp.\
  \bibinfo {pages} {1--9}\BibitemShut {NoStop}%
\bibitem [{\citenamefont {Raussendorf}\ \emph {et~al.}(2003)\citenamefont
  {Raussendorf}, \citenamefont {Browne},\ and\ \citenamefont
  {Briegel}}]{Raussendorf:2003:022312}%
  \BibitemOpen
  \bibfield  {author} {\bibinfo {author} {\bibfnamefont {R.}~\bibnamefont
  {Raussendorf}}, \bibinfo {author} {\bibfnamefont {D.~E.}\ \bibnamefont
  {Browne}}, \ and\ \bibinfo {author} {\bibfnamefont {H.~J.}\ \bibnamefont
  {Briegel}},\ }\href@noop {} {\bibfield  {journal} {\bibinfo  {journal} {Phys.
  Rev. A}\ }\textbf {\bibinfo {volume} {68}},\ \bibinfo {pages} {022312}
  (\bibinfo {year} {2003})}\BibitemShut {NoStop}%
\bibitem [{\citenamefont {Raussendorf}\ \emph {et~al.}(2006)\citenamefont
  {Raussendorf}, \citenamefont {Harrington},\ and\ \citenamefont
  {Goyal}}]{Raussendorf:2006:2242}%
  \BibitemOpen
  \bibfield  {author} {\bibinfo {author} {\bibfnamefont {R.}~\bibnamefont
  {Raussendorf}}, \bibinfo {author} {\bibfnamefont {J.}~\bibnamefont
  {Harrington}}, \ and\ \bibinfo {author} {\bibfnamefont {K.}~\bibnamefont
  {Goyal}},\ }\href@noop {} {\bibfield  {journal} {\bibinfo  {journal} {Ann.
  Phys.}\ }\textbf {\bibinfo {volume} {321}},\ \bibinfo {pages} {2242}
  (\bibinfo {year} {2006})}\BibitemShut {NoStop}%
\bibitem [{\citenamefont {Raussendorf}\ \emph {et~al.}(2007)\citenamefont
  {Raussendorf}, \citenamefont {Harrington},\ and\ \citenamefont
  {Goyal}}]{Raussendorf:2007:199}%
  \BibitemOpen
  \bibfield  {author} {\bibinfo {author} {\bibfnamefont {R.}~\bibnamefont
  {Raussendorf}}, \bibinfo {author} {\bibfnamefont {J.}~\bibnamefont
  {Harrington}}, \ and\ \bibinfo {author} {\bibfnamefont {K.}~\bibnamefont
  {Goyal}},\ }\href@noop {} {\bibfield  {journal} {\bibinfo  {journal} {New J.
  Phys.}\ }\textbf {\bibinfo {volume} {9}},\ \bibinfo {pages} {199} (\bibinfo
  {year} {2007})}\BibitemShut {NoStop}%
\bibitem [{\citenamefont {Raussendorf}\ \emph {et~al.}(2005)\citenamefont
  {Raussendorf}, \citenamefont {Bravyi},\ and\ \citenamefont
  {Harrington}}]{raussendorf2005long}%
  \BibitemOpen
  \bibfield  {author} {\bibinfo {author} {\bibfnamefont {R.}~\bibnamefont
  {Raussendorf}}, \bibinfo {author} {\bibfnamefont {S.}~\bibnamefont {Bravyi}},
  \ and\ \bibinfo {author} {\bibfnamefont {J.}~\bibnamefont {Harrington}},\
  }\href@noop {} {\bibfield  {journal} {\bibinfo  {journal} {Phys. Rev. A}\
  }\textbf {\bibinfo {volume} {71}},\ \bibinfo {pages} {062313} (\bibinfo
  {year} {2005})}\BibitemShut {NoStop}%
\bibitem [{\citenamefont {Arute}\ \emph {et~al.}(2019)\citenamefont {Arute},
  \citenamefont {Arya}, \citenamefont {Babbush}, \citenamefont {Bacon},
  \citenamefont {Bardin}, \citenamefont {Barends}, \citenamefont {Biswas},
  \citenamefont {Boixo}, \citenamefont {Brandao}, \citenamefont {Buell} \emph
  {et~al.}}]{supremacy}%
  \BibitemOpen
  \bibfield  {author} {\bibinfo {author} {\bibfnamefont {F.}~\bibnamefont
  {Arute}}, \bibinfo {author} {\bibfnamefont {K.}~\bibnamefont {Arya}},
  \bibinfo {author} {\bibfnamefont {R.}~\bibnamefont {Babbush}}, \bibinfo
  {author} {\bibfnamefont {D.}~\bibnamefont {Bacon}}, \bibinfo {author}
  {\bibfnamefont {J.~C.}\ \bibnamefont {Bardin}}, \bibinfo {author}
  {\bibfnamefont {R.}~\bibnamefont {Barends}}, \bibinfo {author} {\bibfnamefont
  {R.}~\bibnamefont {Biswas}}, \bibinfo {author} {\bibfnamefont
  {S.}~\bibnamefont {Boixo}}, \bibinfo {author} {\bibfnamefont {F.~G.}\
  \bibnamefont {Brandao}}, \bibinfo {author} {\bibfnamefont {D.~A.}\
  \bibnamefont {Buell}},  \emph {et~al.},\ }\href@noop {} {\bibfield  {journal}
  {\bibinfo  {journal} {Nature}\ }\textbf {\bibinfo {volume} {574}},\ \bibinfo
  {pages} {505} (\bibinfo {year} {2019})}\BibitemShut {NoStop}%
\bibitem [{\citenamefont {Paetznick}\ and\ \citenamefont
  {Reichardt}(2012)}]{paetznick_ben2011fault}%
  \BibitemOpen
  \bibfield  {author} {\bibinfo {author} {\bibfnamefont {A.}~\bibnamefont
  {Paetznick}}\ and\ \bibinfo {author} {\bibfnamefont {B.~W.}\ \bibnamefont
  {Reichardt}},\ }\href@noop {} {\bibfield  {journal} {\bibinfo  {journal}
  {Quantum Inf. Comput.}\ }\textbf {\bibinfo {volume} {12}},\ \bibinfo {pages}
  {1034} (\bibinfo {year} {2012})}\BibitemShut {NoStop}%
\end{thebibliography}

%

\end{document}